\begin{document}
\newcommand{\commentout}[1]{}

\newcommand{\nwc}{\newcommand}
\newcommand{\bz}{{\mathbf z}}
\newcommand{\sqk}{\sqrt{\ks}}
\newcommand{\sqkone}{\sqrt{|\ks_1|}}
\newcommand{\sqktwo}{\sqrt{|\ks_2|}}
\newcommand{\invsqkone}{|\ks_1|^{-1/2}}
\newcommand{\invsqktwo}{|\ks_2|^{-1/2}}
\newcommand{\partz}{\frac{\partial}{\partial z}}
\newcommand{\grady}{\nabla_{\ba}}
\newcommand{\gradp}{\nabla_{\bp}}
\newcommand{\gradx}{\nabla_{\bx}}
\newcommand{\invf}{\cF^{-1}_2}
\newcommand{\myphi}{\Phi_{(\eta,\rho)}}
\newcommand{\minrg}{|\min{(\rho,\gamma^{-1})}|}
\newcommand{\al}{\alpha}
\newcommand{\xvec}{\vec{\mathbf x}}
\newcommand{\kvec}{{\vec{\mathbf k}}}
\newcommand{\lt}{\left}
\newcommand{\ksq}{\sqrt{\ks}}
\newcommand{\rt}{\right}
\newcommand{\ga}{\gamma}
\newcommand{\vas}{\varepsilon}
\newcommand{\lan}{\left\langle}
\newcommand{\ran}{\right\rangle}
\newcommand{\tvas}{{W_z^\vas}}
\newcommand{\psiep}{{W_z^\vas}}
\newcommand{\wep}{{W^\vas}}
\newcommand{\weptil}{{\tilde{W}^\vas}}
\newcommand{\wepz}{{W_z^\vas}}
\newcommand{\weps}{{W_s^\ep}}
\newcommand{\wepsp}{{W_s^{\ep'}}}
\newcommand{\wepzp}{{W_z^{\vas'}}}
\newcommand{\wepztil}{{\tilde{W}_z^\vas}}
\newcommand{\vvas}{{\tilde{\ml L}_z^\vas}}
\newcommand{\veptil}{{\tilde{\ml L}_z^\vas}}
\newcommand{\cvc}{{{\ml L}^{\ep*}_z}}
\newcommand{\cvcp}{{{\ml L}^{\ep*'}_z}}
\newcommand{\cvp}{{{\ml L}^{\ep*'}_z}}
\newcommand{\cvtil}{{\tilde{\ml L}^{\ep*}_z}}
\newcommand{\cvtilp}{{\tilde{\ml L}^{\ep*'}_z}}
\newcommand{\vtil}{{\tilde{V}^\ep_z}}
\newcommand{\ktil}{\tilde{K}}
\newcommand{\n}{\nabla}
\newcommand{\tkappa}{\tilde\kappa}
\newcommand{\ks}{{\omega}}
\newcommand{\bx}{\mb x}
\newcommand{\br}{\mb r}
\nwc{\bR}{\mb R}
\nwc{\bH}{{\mb H}}
\newcommand{\bu}{\mathbf u}
\nwc{\bxp}{{{\mathbf x}}}
\nwc{\bap}{{{\mathbf y}}}
\newcommand{\bD}{\mathbf D}
\newcommand{\bA}{\mathbf A}
\nwc{\bPhi}{\mathbf{\Phi}}
\nwc{\bh}{\mathbf h}
\newcommand{\bB}{\mathbf B}
\newcommand{\bC}{\mathbf C}
\newcommand{\bp}{\mathbf p}
\newcommand{\bq}{\mathbf q}
\nwc{\bI}{\mathbf I}
\nwc{\bP}{\mathbf P}
\nwc{\bs}{\mathbf s}
\nwc{\bX}{\mathbf X}
\newcommand{\pdg}{\bp\cdot\nabla}
\newcommand{\pdgx}{\bp\cdot\nabla_\bx}
\newcommand{\one}{1\hspace{-4.4pt}1}
\newcommand{\corr}{r_{\eta,\rho}}
\newcommand{\rinf}{r_{\eta,\infty}}
\newcommand{\rzero}{r_{0,\rho}}
\newcommand{\rzeroinf}{r_{0,\infty}}
\nwc{\om}{\omega}

\nwc{\nwt}{\newtheorem}
\nwc{\xp}{{x^{\perp}}}
\nwc{\yp}{{y^{\perp}}}
\nwt{remark}{Remark}
\nwt{definition}{Definition} 

\nwc{\ba}{{\mb a}}
\nwc{\bal}{\begin{align}}
\nwc{\be}{\begin{equation}}
\nwc{\ben}{\begin{equation*}}
\nwc{\bea}{\begin{eqnarray}}
\nwc{\beq}{\begin{eqnarray}}
\nwc{\bean}{\begin{eqnarray*}}
\nwc{\beqn}{\begin{eqnarray*}}
\nwc{\beqast}{\begin{eqnarray*}}

\nwc{\eal}{\end{align}}
\nwc{\ee}{\end{equation}}
\nwc{\een}{\end{equation*}}
\nwc{\eea}{\end{eqnarray}}
\nwc{\eeq}{\end{eqnarray}}
\nwc{\eean}{\end{eqnarray*}}
\nwc{\eeqn}{\end{eqnarray*}}
\nwc{\eeqast}{\end{eqnarray*}}

\nwc{\vep}{\varepsilon}
\nwc{\ep}{\epsilon}
\nwc{\ept}{\epsilon}
\nwc{\vrho}{\varrho}
\nwc{\orho}{\bar\varrho}
\nwc{\ou}{\bar u}
\nwc{\vpsi}{\varpsi}
\nwc{\lamb}{\lambda}
\nwc{\Var}{{\rm Var}}

\nwt{proposition}{Proposition}
\nwt{theorem}{Theorem}
\nwt{summary}{Summary}
\nwt{lemma}{Lemma}
\nwt{cor}{Corollary}
\nwc{\nn}{\nonumber}
\nwc{\mf}{\mathbf}
\nwc{\mb}{\mathbf}
\nwc{\ml}{\mathcal}

\nwc{\IA}{\mathbb{A}} 
\nwc{\bi}{\mathbf i}
\nwc{\bo}{\mathbf o}
\nwc{\IB}{\mathbb{B}}
\nwc{\IC}{\mathbb{C}} 
\nwc{\ID}{\mathbb{D}} 
\nwc{\IM}{\mathbb{M}} 
\nwc{\IP}{\mathbb{P}} 
\nwc{\II}{\mathbb{I}} 
\nwc{\IE}{\mathbb{E}} 
\nwc{\IF}{\mathbb{F}} 
\nwc{\IG}{\mathbb{G}} 
\nwc{\IN}{\mathbb{N}} 
\nwc{\IQ}{\mathbb{Q}} 
\nwc{\IR}{\mathbb{R}} 
\nwc{\IT}{\mathbb{T}} 
\nwc{\IZ}{\mathbb{Z}} 

\nwc{\cE}{{\ml E}}
\nwc{\cP}{{\ml P}}
\nwc{\cQ}{{\ml Q}}
\nwc{\cL}{{\ml L}}
\nwc{\cX}{{\ml X}}
\nwc{\cW}{{\ml W}}
\nwc{\cZ}{{\ml Z}}
\nwc{\cR}{{\ml R}}
\nwc{\cV}{{\ml L}}
\nwc{\cT}{{\ml T}}
\nwc{\crV}{{\ml L}_{(\delta,\rho)}}
\nwc{\cC}{{\ml C}}
\nwc{\cO}{{\ml O}}
\nwc{\cA}{{\ml A}}
\nwc{\cK}{{\ml K}}
\nwc{\cB}{{\ml B}}
\nwc{\cD}{{\ml D}}
\nwc{\cF}{{\ml F}}
\nwc{\cS}{{\ml S}}
\nwc{\cM}{{\ml M}}
\nwc{\cG}{{\ml G}}
\nwc{\cH}{{\ml H}}
\nwc{\bk}{{\mb k}}
\nwc{\cbz}{\overline{\cB}_z}
\nwc{\supp}{{\hbox{supp}(\theta)}}
\nwc{\fR}{\mathfrak{R}}
\nwc{\bY}{\mathbf Y}
\newcommand{\mbr}{\mb r}
\nwc{\pft}{\cF^{-1}_2}
\nwc{\bU}{{\mb U}}
\nwc{\bG}{{\mb G}}

\title{Compressed Remote Sensing of Sparse  Objects 
}

\author{Albert C.  Fannjiang}
  \address{
   Department of Mathematics,
    University of California, Davis, CA 95616-8633}
\email{
fannjiang@math.ucdavis.edu}
  \thanks{
  The research supported in part by  DARPA Grant N00014-02-1-0603 and NSF Grant DMS 0811169}
   \author{ Pengchong Yan}
    \address{Applied and Computational Mathematics, California Institute of Technology, CA 91125}  \email{yan@acm.caltech.edu}

   \author{ Thomas Strohmer}
      \address{
   Department of Mathematics,
    University of California, Davis, CA 95616-8633}
   \email{strohmer@math.ucdavis.edu}
 
       \begin{abstract}
 The linear inverse source and scattering problems are  studied  
from the perspective of  compressed sensing, in particular
the idea that sufficient incoherence and sparsity guarantee
uniqueness of the solution.  By introducing the
sensor as well as target ensembles,  the 
maximum  number of recoverable targets (MNRT) is proved
to be at least proportional to the number of measurement
data modulo a log-square factor with overwhelming
probability. 

Important contributions include  the discoveries
of the threshold aperture, consistent with the classical Rayleigh criterion,  and the 
decoherence effect  induced  by random antenna locations.  

The prediction of theorems are confirmed by numerical simulations.

       \end{abstract}
       
       \maketitle

 \section{Introduction}
We consider the imaging problem in  the form of  inverse  source or scattering problem which  has
 wide-range applications such as  radar, sonar and computed tomography. The imaging problem  is typically
 plagued by nonuniqueness and instability and
 hence mathematically  challenging.
 Traditional  methods such as  matched field 
processing ~\cite{Tol} are  limited in 
the number of targets  that  can be reliably recovered at high resolution.
They often fail to detect a substantial number of targets, while at the
same time they tend to produce artifacts obscuring the real
target images. These limitations are due to the presence of noise and the fact that  the imaging problem is in practice underdetermined. The standard regularization
methods can handle to some extent the problem with noise
but are inadequate to remedy the issue of nonuniqueness of the solution.

In this paper we utilize the fact that in many imaging 
applications the targets are sparse in the sense that they typically occupy a 
small fraction  of  the overall region of interest (the target domain).
This sparsity assumption suggests to approach the imaging problem by
using the framework of {\em compressed sensing}. 

At the core of compressed 
sensing lies the following problem (here we focus,
as is common in the compressed sensing community, on the discrete setting).
Assume $X \in \IC^m$ is a signal that is sparse, i.e., the number of its
non-zero components (measured by the $\ell_0$-quasinorm $\|X\|_0$ which
is simply the number of non-zero entries of $X$) satisfies $s:=\|X\|_0 \ll
m$. Let $Y\in \IC^n$ be the measurement data vector.  
We explore in this paper the {\em linear} inverse problem
which can be formulated as 
$Y=\bA X$  where $\bA$ is an $n\times m$ matrix with $
n\ll m$. 
The goal is to recover $X$, 
given the data vector $Y$ and the sensing matrix  $\bA$ of full rank. As $n\ll m$,  $\bA X=Y$ is severely  underdetermined and  unique reconstruction of $X$ is 
in general impossible.

However, due to the sparsity of $X$ one can
compute $X$ by solving the optimization problem
\begin{equation}
\min \|X\|_0 \qquad \text{s.t.} \,\, \bA X=Y.  \tag{L0}
\label{L0}
\end{equation}
Since~\eqref{L0} is NP-hard and thus computationally infeasible, we 
consider instead its convex relaxation, also known as Basis Pursuit (BP),
\begin{equation}
\min \|X\|_1 \qquad \text{s.t.} \,\, \bA X=Y  \tag{L1}
\label{L1}
\end{equation}
which can be solved by linear and quadratic  programming techniques. 
The amazing discovery due to David Donoho was that under certain conditions
on the matrix $\bA$ and the sparsity of $X$, both
\eqref{L1} and \eqref{L0} have the same  unique solution~\cite{DH}. One such condition is the
{\em Restricted Isometry Property} (RIP)  due to Candes and Tao~\cite{CT}, which 
requires essentially that any $n \times s$ submatrix of
$\bA$ is an approximate isometry. This property is satisfied
by a number of matrices such as Gaussian random matrices or random partial
Fourier matrices~\cite{CT, CRT1, RV}. In that case, as long as
$s\leq {\mathcal O}(n/\log(m))$, with high probability the solution
of~\eqref{L1} will indeed coincide with the solution of~\eqref{L0}.
Another conditon for which equivalence between (L0) and (L1) can
be proven is based on the {\em incoherence} of the columns of $\bA$,
which refers to the property that the inner product of any two columns 
of $\bA$ is small \cite{DE,GN,Tro}. Moreover,
the performance of BP is stable w.r.t. the presence
of  noise and error \cite{CRT2, DET, Tro3}.  
Finally the computational complexity of BP  can be significantly 
reduced 
by using the various  greedy
algorithms in place of the linear programming technique \cite{DM, NTV, NV,  Tro, Tro3}. The most basic greedy algorithm
relevant here is Orthogonal Matching Pursuit (OMP) which
has been thoroughly analyzed in \cite{Tro}.

For the imaging problem, the sensing  matrix $\bA$ 
represents a physical process (typically  wave propagation) and
thus its entries cannot be arbitrarily chosen at our convenience. Therefore we
cannot simply assume that $\bA$ satisfies any of the conditions
that make compressed sensing work. The few physical parameters that we have control over are
the wavelength $\lambda$ of the probe wave, the
locations and number $n$ of sensors  and the aperture $A$  of the probe array.
This is one of
the reasons that make the practical realization  of compressed sensing
a challenging task.

The  paper is organized as follows. In Section \ref{sec2} we describe the physical setup, formulate the imaging problem
in the framework of compressed sensing and make
qualitative statements of our main results. In Section \ref{sec3}
we prove the main result for  the inverse source problem,
in particular the coherence estimate (Section \ref{sec3.1})
and the spectral norm bound (Section \ref{sec3.2}). 
In Section \ref{sec:ibs}, we prove the main result for the inverse
Born scattering problem for the response matrix  imaging (Section \ref{sec:rm})
and the synthetic aperture  imaging (Section \ref{sec:sa}). 
In Section \ref{sec:num} and Appendix B, we discuss the numerical method
and present simulation results that confirm qualitatively 
the predictions of our theorems.  In Appendix \ref{app:rip} we
discuss the RIP approach to our problems.

\section{Problem formulations  and main results}\label{sec2}
In this paper, we study the inverse source and scattering problems both in the linear regime to suit  the current 
framework of compressed
sensing. 
For simplicity and definiteness  we consider the three dimensional space and assume that all targets are in the transverse plane $\{z=z_0\}$ and all sensors are in another transverse plane $\{z=0\}$. The exact Green function for the Helmholtz equation which governs the monochromatic wave propagation is
\beq
\label{green}
G(\br,\ba)={e^{i\om|\br-\ba|}\over 4\pi |\br-\ba|},\quad
\br=( x,y, z_0),\quad \ba=(\xi,\eta, 0). 
\eeq
We assume that
the phase speed $c=1$ so that the frequency $\omega $
equals 
the wavenumber.

 We  consider the Fresnel diffraction regime where
the distance $z_0$ between the targets and the sensors
is much larger than the wavelength of the probe wave
and  the linear dimensions of the domains \cite{BW} 
\beq
\label{6-2}
z_0\gg A+L ,\quad z_0\gg \lambda
\eeq
where  $L$ is  the linear dimension of the target  domain. 
This is
the remote sensing regime. 

Under (\ref{6-2}) 
the Green function  (\ref{green}) can be approximated by
the universal parabolic  form  \cite{BW}
\beq
\label{8-3}
G(\br,\ba)={e^{i\om z_0}\over 4\pi z_0}  e^{i\om|x-\xi|^2/(2z_0)}e^{i\om|y-\eta|^2/(2z_0)},
\eeq
which is called the paraxial Green function.
This follows from the truncated Taylor expansion of the function $|\br-\ba|$
\[
|\br-\ba|\approx z_0+{|x-\xi|^2\over 2z_0} + {|y-\eta|^2\over 2z_0}
\]
under (\ref{6-2}). 

In the case
of the inverse source problem, the corresponding sensing matrix $\bA$ is essentially made of the paraxial Green function
for various points in the sensor array and  the target domain.
In this set-up, the entries (\ref{8-3})  of the paraxial sensing matrix   have
the same magnitude so  without loss of generality
the  column vectors of $\bA$ are assumed to have unit $\ell^2$-norm.

A key idea  in our construction of a  suitable
sensing matrix is to randomize the locations $\ba_j=(0,\xi_j,\eta_j), j=1,...,n$ of
the $n$ sensors within a fixed aperture (a square of size $A$ for example). Indeed, we assume $\xi_j,\eta_j$ are independent
uniformly distributed in $[0,A]$.
We assume that the antenna elements are  independently
uniformly distributed in a square array $[0,A]\times [0,A]$ in
the plane $\{z=0\}$. 
 Define the {\em sensor ensemble} to be
 the sample space of $n$ i.i.d. uniformly distributed
 points in $[0,A]^2$.
  
  We consider the idealized situation where the locations of the targets are a subset of
a square lattice. More precisely,
let $\cM$ be 
a regular square sub-lattice 
$
\cM=\lt\{\br_i: i=1,...,m\rt\}
$
of mesh size $\ell$ in  the transverse plane $\{z=z_0\}$.
Hence the total number of grid points
 $m$ is  a perfect square. We defer the discussion on
 extended targets to the concluding  section. 

Let $
\cS=\lt\{\br_{j_l}: l=1,...,s\rt\}$ be the set of target locations
and  $\sigma_{j_l}, l=1,...,s$ be the (source or scattering) amplitudes of the targets.  Set $\sigma_i=0,
i\not \in \{j_1,..., j_s\}$. 
Define the target vector $X$ to be $X=(\sigma_j)\in \IC^m$.
  We consider the {\em target ensemble} consisting of target vectors  with at most 
 $s$ non-zero entries  whose
 phases are 
 independently, uniformly distributed in $[0,2\pi]$  and
 whose support  indices  are independently and  randomly selected from the index set $\{1,2,...,m\}$. The number $s=\|X\|_0$ is called
 the sparsity of the target vector.

For source  inversion  the targets emit the paraxial waves described
by (\ref{8-3}) which are then  recorded by the sensors. The measurement vector $Y$ can be written as
\beq
\label{u1}
Y=\bA X
\eeq
where 
the matrix  $\bA =[A_{ij}] \in \IC^{n\times m}$ have  the entries 
 \beq
 \label{1.10}
 A_{ij}=G(\ba_i, \br_j),\quad \forall i=1,...,n,\quad
 j=1,...,m.
 \eeq
 
The first main result proved in this paper can be stated roughly as
follows  (see Theorem \ref{thm-is} and Remark \ref{rmk2}
for the precise statement). 

\medskip

\noindent {\bf Result A}. {\em Suppose
\beq
\label{0.1}
{\ell A\over \lambda z_0}\equiv {1\over \rho} \in \IN. 
\eeq
For the product ensemble of targets and sensors, sources of sparsity up to $\cO(n/(\ln{m})^2) $
can be exactly recovered by BP with overwhelming probability.

When only the sensor ensemble is considered, {\bf all} sources of sparsity up to $\cO(\sqrt{n}) $
can be exactly recovered by BP and OMP with overwhelming probability.

}

\medskip

The relation (\ref{0.1})  indicates the existence  of the threshold, optimal aperture given by $ \lambda z_0/ \ell$ corresponding to $\rho=1$ (see Remark \ref{rmk1}  for more discussion
on this point). 
Since the meshsize $\ell$ has the meaning of resolution,
$\rho=1$ is consistent with  the classical
Rayleigh criterion \cite{BW}
\beq
\label{Rayleigh}
\ell\geq {\lambda z_0\over A}.  
\eeq
Our numerical simulations (Figure \ref{fig1}) indeed  indicate 
 that (\ref{Rayleigh}) is sufficient to realize the
 performance stated in Result A.

Next we consider  two imaging settings where the targets
are scatterers instead of sources.    
For  point scatterers of amplitudes $\sigma_{j_l}$ located
at $\br_{j_l}, l=1,2,3,...s$, the resulting Green function  $\tilde G$, including the multiple scattering effect, obeys  the Lippmann-Schwinger equation
\beqn
\tilde G(\br,\ba_i)=G(\br,\ba_i)
+\sum_{l=1}^s \sigma_{j_l}
 G(\br,\br_{j_l})\tilde  G(\br_{j_l},\ba_i),\quad i=1,...,n.
\eeqn 
The exciting field $\tilde G(\br_{j_l},\ba_i)$ is part
of the unknown and 
can be solved for  from  the so called Foldy-Lax equation (see e.g. \cite{mfirm-ip}
for details). 

Hence, the inverse scattering problem
is intrinsically nonlinear. However, often 
linear scattering model  is a good approximation
and  widely used in, e.g.  radar imaging
   in the regimes of 
   physical optics and geometric optics \cite{Bor, CM}
   (see \cite{Hul}
   for a precise
   formulation of the condition).

   One such model is
   the Born approximation (also known as
   Rayleigh-Gans scattering in optics) in which
   the unknown exciting field is replaced by the incident field
   resulting in
   \beq
   \label{born}
\tilde G(\br,\ba_i)-G(\br,\ba_i)
=\sum_{l=1}^s \sigma_{j_l}
 G(\br,\br_{j_l})G(\br_{j_l},\ba_i),\quad i=1,...,n.
\eeq
The left hand side of (\ref{born}) is precisely the scattered field
when the incident field is emitted from  a point source at $\ba_i$. The Born approximation linearizes the relation between  the scatterers  and the scattered field. The goal of inverse scattering is   to reconstruct the targets given the measurements
of the scattered field. 

For the {\em response matrix} (RM)  imaging
\cite{mfirm-ip, SA-rice}, we use the real array aperture as in the
inverse source problem discussed above except the array is also
the source of $n$ probe waves.   One by one, each antenna of the array emits  an impulse  and the entire array
receives the echo. Each transmitter-receiver pair gives rise to
a datum and there are altogether $n^2$ data forming a
datum  matrix called the response matrix.  These data represent
the responses of the targets to the interrogating waves.  

From (\ref{born}) we
see that the corresponding sensing matrix
$\bA^{\rm RM}$  has the entries
\[
A^{\rm RM}_{lj}=G(\ba_i,\br_j)G(\br_j, \ba_k), \quad
l=1,..,n^2,\quad j=1,...,m
\]
where $l$ is related to $i,k$ as
\[
l=i(n-1)+k.
\]

In the second setting, called the synthetic aperture (SA) imaging, 
the real, physical array consists of only one antenna.
The imaging aperture is synthesized by the antenna taking different transmit-receive positions $\ba_i, i=1,...,n$  \cite{SA-rice}. 

The SA imaging considered here  is motivated by 
  synthetic aperture  radar (SAR) imaging.  
  SAR  is a technique  where a substantial aperture can be
  synthesized  by moving a transmit-receive antenna
  along a trajectory  and repeatedly interrogating a search area
  by firing repeated pulses from the antenna and measuring the responses. This can greatly
  leverage a limited probe resource and has
  many applications in remote sensing. 
  The image formation is typically obtained via the matched filter technique  and
   analyzed in the Born approximation \cite{CM}.
   
 Here   we  consider a simplified set-up, neglecting the Doppler effect
    associated with the relative motion between the antenna and targets. 
In this case, the sensing matrix $\bA^{\rm SA}$ has
the entries
\beq
\label{50}
A^{\rm SA}_{ij}=G^2(\ba_i,\br_j),\quad i=1,...,n,\quad
j=1,...,m.
\eeq
In other words, $A^{\rm SA}_{ij}=A^{\rm RM}_{lj}$
with $l=i(n-1)+i.$
A crucial observation about SA imaging is 
 that 
\beq
\label{51}
G^2(\ba_i,\br_j;\om)\sim G(\ba_i,\br_j;2\om)
\eeq
modulo  a $z_0$-dependent factor which does not
matter.

The following is a rough statement for  inverse Born scattering (Theorems \ref{thm:rm}, \ref{thm-sa} and
Remarks \ref{rmk3}, \ref{rmk4})  proved in Section \ref{sec:ibs}.

\medskip

\noindent {\bf Result B.} {\bf (i)} {\em  
For RM imaging, assume the aperture condition (\ref{0.1}). 

For the product ensemble of sensor and target,  scatterers of sparsity up
to $\cO(n^2/(\ln{m})^2)$ can be reconstructed
exactly by BP with overwhelming probability.

When only  the sensor ensemble is considered, {\bf all }scatterers of sparsity up to $\cO(n) $
can be exactly recovered by BP and OMP with overwhelming probability.

{\bf (ii)} For SA imaging, assume the aperture condition 
\beq
\label{0.2}
2/\rho\in \IN
\eeq 
which is weaker than  (\ref{0.1}). 

For the product ensemble of sensor and target,  scatterers of sparsity up
to $\cO(n/(\ln{m})^2)$ can be reconstructed
exactly by BP with overwhelming probability.

When only the sensor ensemble is considered, {\bf all} scatterers  of sparsity up to $\cO(\sqrt{n}) $
can be exactly recovered by BP and OMP with overwhelming probability.}

\medskip

As a result of the SA aperture condition (\ref{0.2}), the corresponding optimal  aperture is  half of that 
for the inverse source and RM imaging. In other words,
SA can produce the qualitatively optimal performance
with half of the aperture. This two-fold enhancement
of resolving power in SA imaging has been  previously
established for the matched-field imaging technique \cite{SA-rice}. 

Our numerical simulations (Section \ref{sec:num})  confirm qualitatively 
the predictions of Result A and B, in particular the threshold
aperture and the asymptotic number of recoverable targets. 

Currently there are two avenues  to compressed sensing \cite{BDE}:
the incoherence approach and the RIP (restricted isometry property)  approach.
When the RIP approach works, the results are typically 
superior  in that {\em all}  targets under a slightly weaker
sparsity constraint can be uniquely determined by BP
without introducing the target ensemble. We demonstrate the strength of the RIP approach for our problems 
in Appendix \ref{app:rip} (see Theorem \ref{thm:is2}  and Theorem \ref{thm:sa2} for stronger results  than {\bf Result A} and {\bf Result B (ii)}, respectively). However, {\bf Result B(i)}
seems unattainable by the RIP approach at present, particularly
the quadratic-in-$n$ behavior of the sparsity constraint.
On the other hand, the incoherence approach gives a unified treatment
to all three results and therefore is adopted in the main text
of the paper.

\section{Source inversion}\label{sec3}
 
\commentout{
\[
X_i(\omega)=\lt\{
\begin{array}
\sigma_{j_l}(\om)&&, i=j_l\\
0&&, \mbox{else}
\end{array}\rt.
\]
}

Let $G(\br,\ba)$ be the Green function
of the time-invariant medium 
and let $\bG$ be
the Green  vector
\beq
\label{1.0}
\bG(\br) =[G(\br,\ba_1),
G(\br,\ba_2),...,G(\br,\ba_n)]^t
\eeq
where $t$ denotes transpose. For the matrix (\ref{1.10}) define the {\em  coherence}  of the matrix $\bA$ by
\[
\mu(\bA)=\max_{i\neq j} {\lt|\bG^*(\bp_i)
\bG(\bp_j)\rt| \over \|\bG(\bp_i)\|\|\bG(\bp_j)\|}. 
\]

\commentout{
The main thrust of the recent compressed sensing paradigm is that
under certain decoherence condition on $\bA$ and
sparsity condition on  $X$  finding 
the unique $\ell^0$-sparsest solution to (\ref{u1})  can be
reduced to 
the $\ell^1$ minimization problem
\beq
\label{L1}
\inf_{X}\lt\{ \|X\|_1: Y=\bA X\rt\}
\eeq
which goes by the name of basis pursuit (BP) \cite{ CRT1, CRT2, Don1, DE, GN}. Unlike
the $\ell^0$ minimization,  basis pursuit
 is a convex optimization problem and can be solved
by the technique of  linear programming \cite{CRT1, CRT2, Don1}.

The equivalence
between the $\ell^0$ and $\ell^1$ minimization 
is established under the incoherence condition in \cite{GN, DE} and  the restricted isometry condition  in \cite{CRT1, CRT2}. 
}

The following theorem is a reformulation of results due to Tropp \cite{Tropp} and the foundation of the imaging techniques
developed in this paper.
 \begin{theorem}
 \label{tropp}
 Let $X$ be drawn from the target ensemble. 
 Assume that  
 \beq
 \label{M}
 \mu^2 s\leq \lt( 8\ln{{m\over \epsilon}}\rt)^{-1},\quad
 \ep\in (0,1) 
 \eeq
 and that for $p\geq 1$
 \beq
 \label{Op}
 3\lt({p\ln{s}\over 2\ln{{m\over \epsilon}}}\rt)^{1/2}+{s\over m}\|\bA\|_2^2\leq {1\over 4 e^{1/4}}.
 \eeq
 Then  $X$ is the unique solution of BP  with probability $1-2\epsilon-s^{-p}$. 
 Here $\|\bA\|_2$ denotes the spectral norm of $\bA$. 
 \end{theorem}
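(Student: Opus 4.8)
The plan is to reduce the statement to the classical dual-certificate (Fuchs-type) sufficient condition for exact recovery by complex $\ell^1$ minimization, and then to exploit the randomness of the support and of the phases of $X$ to make the two quantities that appear in that condition typically small. Write $T=\{j_1,\dots,j_s\}$ for the random support, $\bA_T$ for the corresponding column submatrix of $\bA$, $\bA e_j$ for the $j$-th column of $\bA$, and $\mathrm{sgn}(X_T)\in\IC^s$ for the vector of phases of the nonzero entries of $X$. Conditioning on $T$, it is standard that $X$ is \emph{the} solution of \eqref{L1} as soon as (i) $\bA_T$ has full column rank and (ii) the vector $\nu=\bA_T(\bA_T^*\bA_T)^{-1}\mathrm{sgn}(X_T)$ obeys $|\langle \bA e_j,\nu\rangle|<1$ for every $j\notin T$; the usual one-line convexity argument using $\nu$ as a dual certificate, together with the injectivity of $\bA_T$ on vectors supported in $T$, delivers this. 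So it suffices to bound the probability that (i) or (ii) fails by $2\epsilon+s^{-p}$.

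For (i) I would control $\bA_T^*\bA_T-\bI$ directly. Tropp's estimate on the conditioning of a random subdictionary \cite{Tropp} gives a deviation bound for $\|\bA_T^*\bA_T-\bI\|_2$ in terms of the coherence $\mu$ and the normalized spectral norm $(s/m)^{1/2}\|\bA\|_2$, and hypothesis \eqref{Op} is calibrated precisely so that this bound yields $\|\bA_T^*\bA_T-\bI\|_2\le\tfrac12$ --- hence $\bA_T$ is injective and $\|(\bA_T^*\bA_T)^{-1}\|_2\le 2$ --- with probability at least $1-s^{-p}$. In \eqref{Op} the term $3(p\ln s/(2\ln(m/\epsilon)))^{1/2}$ is the tail contribution that produces the $s^{-p}$, while $(s/m)\|\bA\|_2^2$ is the expectation term. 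This is the step I expect to be the main obstacle: it does not come from elementary inequalities but from a noncommutative symmetrization/decoupling argument (Rudelson's selection lemma, or a matrix Chernoff / Rudelson--Vershynin estimate), and getting the explicit constants is the delicate part; since the theorem is advertised as a reformulation of Tropp's results, I would import this estimate essentially verbatim.

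For (ii), work on the good event from (i) and fix $j\notin T$. Set $w_j=(\bA_T^*\bA_T)^{-1}\bA_T^*(\bA e_j)\in\IC^s$, so that $\langle \bA e_j,\nu\rangle=\langle w_j,\mathrm{sgn}(X_T)\rangle=\sum_{t\in T}\overline{(w_j)_t}\,\mathrm{sgn}(X_t)$. Conditioned on $T$, this is a sum of independent, mean-zero, bounded complex random variables in the phases of $X$, so a Hoeffding-type inequality gives $\Pr\bigl(|\langle w_j,\mathrm{sgn}(X_T)\rangle|\ge 1\bigr)\le 2\exp\bigl(-1/(2\|w_j\|_2^2)\bigr)$. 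Since $\|w_j\|_2\le\|(\bA_T^*\bA_T)^{-1}\|_2\,\|\bA_T^*(\bA e_j)\|_2\le 2\bigl(\sum_{t\in T}|\langle \bA e_t,\bA e_j\rangle|^2\bigr)^{1/2}\le 2\mu\sqrt{s}$, each probability is at most $2\exp\bigl(-1/(8\mu^2 s)\bigr)$; a union bound over the at most $m$ indices $j\notin T$, combined with the calibration $\mu^2 s\le(8\ln(m/\epsilon))^{-1}$ from \eqref{M}, bounds the probability that (ii) fails by $2\epsilon$.

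Combining, (i) and (ii) hold simultaneously off an event of probability at most $s^{-p}+2\epsilon$, and on that event $\nu$ is a valid dual certificate and $\bA_T$ is injective, so $X$ is the unique solution of \eqref{L1}. The only genuinely hard ingredient is the subdictionary-conditioning bound behind (i); the remainder is phase randomization plus Hoeffding and a union bound, and the precise forms of \eqref{M} and \eqref{Op} are exactly what those two estimates demand.
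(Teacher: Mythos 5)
Your proposal is correct and follows essentially the same route as the paper: the paper simply cites two results from Tropp's random-subdictionary work --- a conditioning bound giving $\|\bA_T^*\bA_T-\bI\|_2<\tfrac12$ with probability $1-(2/s)^p$ under \eqref{Op}, and an exact-recovery criterion under \eqref{M} plus $\sigma_{\min}(\bA_T)\ge 2^{-1/2}$ failing with probability at most $2\epsilon$ --- and multiplies the conditional probabilities exactly as you do. The only difference is that you unpack the second cited result into its underlying dual-certificate/phase-randomization/Hoeffding argument, which is indeed what Tropp's proof does, so nothing is gained or lost.
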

We explain the connection of the theorem  with \cite{Tropp} in Appendix \ref{sec:pf}.

\begin{theorem}
\label{thm-is} Let the target vector be randomly drawn from
the target ensemble and the antenna array be randomly
drawn from the sensor ensemble and suppose
\beq
\label{aperture}
{\ell A\over \lambda z_0}\equiv {1\over \rho} \in \IN.
\eeq
If 
\beq
\label{m-1}
m\leq {\delta\over 2} e^{K^2/2},\quad \delta, K>0.
\eeq
\commentout{
and
 \beq
 \label{Op2}
 3p\lt({\ln{s}\over 2\ln{{m\over \epsilon}}}\rt)^{1/2}+{2s\over n}\leq {1\over 4 e^{1/4}},\quad p>1
 \eeq
 }
then the targets of sparsity up to
\beq
\label{spark}
s< {n\over  64  \ln{2m\over \delta}\ln{m\over \ep}}
\eeq
 can be recovered exactly by BP with
probability greater than or equal to 
\beq
\label{prob}
\lt[1-2\delta -{\rho n (n-1)^{3/2}\over m^{1/2}}\rt]
\times \lt[1-2\ep-s^{-p}\rt], \quad
 p={\ln{m} -\ln{\ep} \over  288 \sqrt{e}\ln s}.
\eeq
\end{theorem}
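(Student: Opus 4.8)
The plan is to deduce the theorem from Theorem \ref{tropp}, establishing its two hypotheses (\ref{M}) and (\ref{Op}) by separate estimates — a coherence bound and a spectral-norm bound — that use only the randomness of the sensor ensemble, and then combining the resulting sensor-ensemble probability with Tropp's target-ensemble probability $1-2\ep-s^{-p}$. First I would note that the value $p=(\ln m-\ln\ep)/(288\sqrt e\ln s)$ in (\ref{prob}) is chosen precisely so that the first summand of (\ref{Op}) equals $1/(8e^{1/4})$; hence (\ref{Op}) reduces to $(s/m)\|\bA\|_2^2\le 1/(8e^{1/4})$. Under the sparsity bound (\ref{spark}) it therefore suffices to prove, with the claimed sensor-ensemble probability, the two estimates $\mu(\bA)^2\le 8\ln(2m/\delta)/n$ — which by (\ref{spark}) gives $\mu^2 s<(8\ln(m/\ep))^{-1}$, i.e.\ (\ref{M}) — and $\|\bA\|_2^2=\cO(m/n)$ — which by (\ref{spark}) gives $(s/m)\|\bA\|_2^2=\cO(s/n)\le 1/(8e^{1/4})$, i.e.\ the residual part of (\ref{Op}).

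For the coherence estimate (Section \ref{sec3.1}): with the paraxial kernel (\ref{8-3}) and unit-normalized columns, $\bG^*(\br_i)\bG(\br_j)$ equals $n^{-1}\sum_{l=1}^{n}\exp(-i\om z_0^{-1}[\xi_l(x_j-x_i)+\eta_l(y_j-y_i)])$ times a deterministic unimodular factor, so its modulus depends on $(i,j)$ only through the lattice difference $(x_j-x_i,y_j-y_i)$; consequently $\mu(\bA)$ is the maximum of fewer than $4m$ (rather than $m(m-1)$) random variables, each the modulus of an average of $n$ i.i.d.\ unimodular terms. The mean of each such term vanishes because the aperture condition (\ref{aperture}), $\ell A/(\lambda z_0)=1/\rho\in\IN$, forces $\int_0^A\exp(-i\om z_0^{-1}\xi\,k\ell)\,d\xi=0$ for every nonzero integer $k$; this is the sole place the integrality of $1/\rho$ enters this step. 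A Hoeffding-type inequality (applied, say, to the real and imaginary parts) and a union bound over those $<4m$ difference vectors, with (\ref{m-1}) in the form $e^{-K^2/2}\le\delta/(2m)$ absorbing the union factor, then yield $\mu(\bA)^2\le 8\ln(2m/\delta)/n$ with probability at least $1-2\delta$.

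For the spectral-norm bound (Section \ref{sec3.2}): write $\|\bA\|_2^2=\|\bA\bA^*\|_2$ and control it by the maximal absolute row sum, i.e.\ $\|\bA\bA^*\|_2\le m/n+\max_i\sum_{l\ne i}|(\bA\bA^*)_{il}|$, the diagonal entries being $m/n$. Each off-diagonal entry $(\bA\bA^*)_{il}$ is $n^{-1}$ times a product of two one-dimensional Dirichlet kernels of length $\sqrt m$ in the random sensor differences $\om z_0^{-1}\ell(\xi_i-\xi_l)$ and $\om z_0^{-1}\ell(\eta_i-\eta_l)$; each such kernel is bounded by $\sqrt m$ and is typically of size $\cO(1)$, becoming large only when a sensor difference falls within $\cO(1/\sqrt m)$ of the $\rho A$-spaced exceptional lattice — again a consequence of (\ref{aperture}). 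A second-moment/Markov estimate on $\sum_{l\ne i}|(\bA\bA^*)_{il}|$ together with a union bound over the $n(n-1)$ ordered sensor pairs shows that, off an event of probability at most $\rho n(n-1)^{3/2}/m^{1/2}$, one has $\max_i\sum_{l\ne i}|(\bA\bA^*)_{il}|=\cO(m/n)$ and hence $\|\bA\|_2^2=\cO(m/n)$.

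Since the coherence event and the spectral-norm event both depend only on the sensor ensemble, their intersection has probability at least $1-2\delta-\rho n(n-1)^{3/2}/m^{1/2}$; on it $\bA$ satisfies both (\ref{M}) and (\ref{Op}), so Theorem \ref{tropp} applies conditionally and, the target ensemble being independent of the sensor ensemble, delivers exact recovery by BP with conditional probability $1-2\ep-s^{-p}$, which gives the product bound (\ref{prob}). I expect the spectral-norm step to be the main obstacle: the off-diagonal Dirichlet kernels genuinely reach order $\sqrt m$ on a set of sensor configurations of small but non-negligible measure, so no uniform bound will do, and one must quantify with the correct powers of $n$ and $m$ how rarely the i.i.d.\ sensor locations cluster near the aperture-induced lattice — this is exactly where the loss $\rho n(n-1)^{3/2}/m^{1/2}$ is incurred, and why the theorem needs $m$ large compared with $n$. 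The coherence estimate, by contrast, is a routine concentration argument once the reduction to $\cO(m)$ variables and the vanishing-mean identity are in hand.
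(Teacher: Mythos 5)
Your proposal follows essentially the same route as the paper: reduce to Tropp's theorem by (i) a Hoeffding-plus-union-bound coherence estimate over the $\cO(m)$ lattice differences, with the aperture condition $1/\rho\in\IN$ killing the mean of the random exponential sums, and (ii) a spectral-norm bound obtained from the row sums of $\bA\bA^*$ (the paper phrases this via Gershgorin), where each off-diagonal entry is a product of Dirichlet kernels controlled by keeping every sensor-pair difference at distance $\gtrsim\sqrt{(n-1)/m}$ from the resonance lattice, the triangular density bounded by $\rho$ and the union bound over $n(n-1)$ pairs producing exactly the loss $\rho n(n-1)^{3/2}/m^{1/2}$; the two sensor-ensemble events are then intersected and multiplied against Tropp's target-ensemble probability. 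Your observation that $p=(\ln m-\ln\ep)/(288\sqrt{e}\ln s)$ makes the first summand of (\ref{Op}) equal to $1/(8e^{1/4})$ is correct and matches the theorem's stated constant (the paper's own proof body writes $72\sqrt{e}$ at one point, an internal inconsistency your computation resolves in favor of the statement).
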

\begin{proof}
The proof of the theorem hinges on the following
two estimates.

\begin{theorem}
\label{lemma1}
\label{thm3}
Assume (\ref{aperture}) and
\beq
\label{mesh}
m\leq {\delta\over 2} e^{K^2/2} 
\eeq
for some positive $\delta$ and $K$. 
Then the coherence  of $\bA$ satisfies
\beq
\label{100}
\mu(\bA)\leq \sqrt{2}K/\sqrt{n}
\eeq
with probability greater than  $(1-\delta)^2$. 
\end{theorem} 

\begin{remark}

The general lower bound for coherence
\cite{DGS, Wel}
\[
\sqrt{m-n\over n(m-1)}\leq \mu\leq 1
\]
implies that the coherence bound  (\ref{100}) is optimal
modulo a constant factor.
\end{remark}

\begin{remark}\label{rmk1}
\commentout{
A main contribution of Theorem \ref{thm3}  is the 
discovery of the threshold, optimal aperture given by $\rho=1$ in (\ref{aperture})
which can be interpreted as follows. The classical
Rayleigh criterion says that the two-point resolution
$\ell$ is given by
\[
\ell={\lambda z_0\over A} 
\]
corresponding  to $\rho=1$. 
}
Since the coherence of the sensing matrix should decrease
as  the aperture increases  and since the analysis in Section \ref{sec3.1} shows that the coherence is of the same
order of magnitude as  $n^{-1/2}$ whenever (\ref{aperture}) holds,
simple interpolation leads to the conclusion  that the coherence
should be roughly constant for 
\beq
\label{12-2}
A\geq {\lambda z_0\over \ell}
\eeq
corresponding to $\rho\leq 1$. 
The right hand side of (\ref{12-2}), corresponding to $\rho=1$,  defines the optimal aperture.
\end{remark}

\begin{theorem}
\label{lemma2}
\label{cor1}
The matrix $\bA$ has full rank and 
its spectral norm satisfies the bound
\beq
\label{norm}
\| \bA\|_2^2\leq 2m/n
\eeq
 with probability greater than  
\beq
\label{11-2}
1-{\rho n(n-1)^{3/2}\over m^{1/2}},\quad \rho={\lambda z_0\over \ell A}.
\eeq
\end{theorem}
\begin{remark}\label{rmk2}
By the theorems of Donoho, Elad \cite{DE} and Tropp \cite{Tro},
the targets of sparsity
\[
s<{1\over 2} (1+{1\over \mu(\bA)})
\]
can be recovered exactly by BP as well as
by Orthogonal Matching Pursuit (OMP).

Theorems \ref{lemma1} and \ref{lemma2} imply that
with probability  greater than
\[
1-2\delta- {\rho n(n-1)^{3/2}\over m^{1/2}}
\]
of the sensor ensemble, {\bf all} targets of sparsity
\[
s<{1\over 2} (1+{\sqrt{n}\over \sqrt{2} K})
\]
can be recovered exactly by BP as well as OMP.

\end{remark}
Condition (\ref{spark}) implies the existence of $K$ such that
\beq
\label{K}
2\ln {2m \over \delta}< K^2< {n\over 32 s\ln{m\over \ep}}.
\eeq
As a consequence  (\ref{mesh}) and (\ref{M}) are satisfied with probability
greater than $1-2\delta$ by Theorem \ref{lemma1}.

Now  the norm bound (\ref{norm}) 
implies (\ref{Op})  if 
\beq
 \label{Op2}
 3\lt({p\ln{s}\over 2\ln{{m\over \epsilon}}}\rt)^{1/2}+{2s\over n}\leq {1\over 4 e^{1/4}},\quad p>1,
 \eeq
which in turn follows from  (\ref{spark}) and the condition
 \[
 \ln{2m\over \delta} \ln{m\over \ep}
 \geq {1\over 96} \lt({1\over 2e^{1/4}}- \lt(p\ln s\over 2\ln {m\over \ep}\rt)^{1/2}\rt)^{-1}.
 \]
 Hence for $m\gg s$ (and hence $n\gg s$) we can choose $p$ in (\ref{Op})  to be 
 \[ p={\ln{m} -\ln{ \ep} \over  72\sqrt{e}\ln s}. 
  \]

Since Theorems \ref{lemma1} and \ref{cor1} hold
with probability
greater than 
\[
1-2\delta- {\rho n(n-1)^{3/2}\over m^{1/2}}.
\] 
and since the target ensemble is independent
of the sensor ensemble we have the
 bound (\ref{prob}) for the probability of
exact  recovery.

\end{proof}
\subsection{Proof of Theorem \ref{lemma1}: coherence estimate}\label{sec3.1}
\begin{proof}
Summing over $\ba_l, l=1,...,n$ we obtain 
\beq
\label{1.20}
\sum_{l=1}^n A^*_{li}A_{lj}
&=&e^{i\om(x_{j}^2+y_{j}^2-x_i^2-y_i^2)/(2z_0)}{1\over n}  \sum_{l=1}^n e^{i\xi_l\om(x_i-x_j)/z_0}
e^{i\eta_l\om(y_i-y_j)/z_0}.
\eeq
Define the random variables $X_l, Y_l, l=1,...,n$, as 
\beq
\label{9.1}
X_l&=&\cos{\lt[(\xi_l(x_i-x_j)+\eta_l(y_i-y_j))\om/z_0\rt]}\\
Y_l&=&\sin{\lt[(\xi_l(x_i-x_j)+\eta_l(y_i-y_j))\om/z_0\rt]}\label{9.2}
\eeq
and their respective sums
\beqn
\label{9}
S_n=\sum_{l=1}^n X_l,\quad T_n=\sum_{l=1}^n Y_l.
\eeqn
Then the absolute value of the
right hand side of (\ref{1.20}) is bounded by
\beq
\label{sum}
{1\over n} \lt|S_n+iT_n\rt|\leq {1\over n}\lt(\lt|S_n-\IE S_n\rt|
+\lt|T_n-\IE T_n\rt|+\lt|\IE (S_n+iT_n)\rt|
\rt).
\eeq

To estimate the right hand side of  (\ref{sum}), we recall the Hoeffding inequality
\cite{Hoe}.

\begin{proposition}
Let $X_1, ..., X_n$ be independent random variables. Assume
that $X_l \in [a_l, b_l], l=1,...,n$ almost surely.
Then we have 
\beq
\label{hoeff}
\IP\lt[\lt|S_n -\IE S_n\rt|\geq nt\rt]
\leq 2\exp{\lt[-{2n^2 t^2\over \sum_{l=1}^n (b_l-a_l)^2}\rt]}
\eeq
for all positive values of $t$. 
\end{proposition}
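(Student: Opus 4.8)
The plan is to establish the upper-tail estimate and then recover the two-sided bound by symmetry. The engine is the exponential Markov (Chernoff) inequality: for any $\lambda>0$,
\be
\IP\lt[S_n-\IE S_n\geq nt\rt]\leq e^{-\lambda nt}\,\IE\lt[e^{\lambda(S_n-\IE S_n)}\rt].
\ee
Writing $Z_l=X_l-\IE X_l$ and invoking independence, the moment generating function factorizes as $\IE[e^{\lambda(S_n-\IE S_n)}]=\prod_{l=1}^n\IE[e^{\lambda Z_l}]$, so the entire problem reduces to bounding the moment generating function of a single centered, bounded variable.

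The technical heart is the following auxiliary lemma (Hoeffding's lemma): if $Z\in[a,b]$ almost surely and $\IE Z=0$, then $\IE[e^{\lambda Z}]\leq e^{\lambda^2(b-a)^2/8}$ for every $\lambda\in\IR$. I would prove it by convexity: since $z\mapsto e^{\lambda z}$ is convex, on $[a,b]$ it lies below its chord, giving the pointwise bound $e^{\lambda Z}\leq \frac{b-Z}{b-a}e^{\lambda a}+\frac{Z-a}{b-a}e^{\lambda b}$. Taking expectations and using $\IE Z=0$ leaves a function of $\lambda$ alone, whose logarithm I would write as $g(h)=-ph+\ln(1-p+pe^h)$ with $h=\lambda(b-a)$ and $p=-a/(b-a)\in[0,1]$. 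A direct computation gives $g(0)=g'(0)=0$ and $g''(h)=q(1-q)\leq 1/4$ with $q=pe^h/(1-p+pe^h)\in[0,1]$; Taylor's theorem with remainder then yields $g(h)\leq h^2/8$, which is exactly the claimed bound.

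Applying the lemma to each $Z_l$ (whose range has length $b_l-a_l$) and inserting into the Chernoff bound gives
\be
\IP\lt[S_n-\IE S_n\geq nt\rt]\leq \exp\lt[-\lambda nt+\frac{\lambda^2}{8}\sum_{l=1}^n(b_l-a_l)^2\rt].
\ee
I would then minimize the exponent over $\lambda>0$, the optimal choice being $\lambda^\ast=4nt/\sum_l(b_l-a_l)^2$, which collapses the right side to $\exp[-2n^2t^2/\sum_l(b_l-a_l)^2]$. Finally, running the identical argument on the variables $-X_l$ controls the lower tail $\IP[S_n-\IE S_n\leq -nt]$ by the same quantity, and a union bound over the two one-sided events supplies the factor $2$ in \eqref{hoeff}.

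The main obstacle is the auxiliary lemma, and specifically the uniform estimate $g''\leq 1/4$: everything else is bookkeeping around the Chernoff method, but it is the quadratic-in-$\lambda$ control of the log-moment-generating-function that produces the sharp constant $2$ in the exponent, and obtaining it requires the convexity and chord step followed by the second-derivative bound rather than any cruder estimate.
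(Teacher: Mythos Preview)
Your argument is correct and is precisely the classical proof of Hoeffding's inequality. However, the paper does not actually prove this proposition: it is stated as a recalled result with a citation to Hoeffding's original 1963 paper \cite{Hoe}, and is then immediately applied. So there is no ``paper's own proof'' to compare against; your write-up simply supplies the standard Chernoff--Hoeffding argument that the paper takes for granted.
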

We apply the Hoeffding inequality to both $S_n$ and $T_n$. To
this end, we have $a_l=-1, b_l=1, \forall l$ 
and set
\[
t=K/\sqrt{n},\quad K>0. 
\]
Then we obtain 
\beq
\label{hoeff2}
\IP\lt[n^{-1}\lt|S_n -\IE S_n\rt|\geq K/\sqrt{n}\rt]
&\leq& 2e^{-{K^2/2}}\\
\IP\lt[n^{-1}\lt|T_n -\IE T_n\rt|\geq K/\sqrt{n}\rt]
&\leq& 2e^{-{K^2/2}}\label{hoeff22}.
\eeq
Note that the quantities $S_n, T_n$ depend on $x_i-x_j, y_i-y_j$, i.e.
\[
S_n=S_n(x_i-x_j, y_i-y_j), \quad T_n=T_n(x_i-x_j, y_i-y_j). 
\]
We use (\ref{hoeff2})-(\ref{hoeff22}) and the union bound to obtain
\beq
\label{10.1}&&{\IP\lt[\max_{i\neq j}n^{-1}\lt|S_n(x_i-x_j, y_i-y_j)-\IE S_n(x_i-x_j, y_i-y_j)\rt|\geq K/\sqrt{n}\rt]}\\
&&\hspace{1cm} \leq 2(m-1)e^{-K^2/2}\nn\\
&&{\IP\lt[\max_{i\neq j}n^{-1}\lt|T_n(x_i-x_j, y_i-y_j)-\IE T_n(x_i- x_j, y_i-y_j)\rt|\geq K/\sqrt{n}\rt]} \label{10.2}\\
&&\hspace{1cm}\leq 2(m-1)e^{-K^2/2}\nn
\eeq

Hence, 
 if (\ref{mesh}) holds 
for  any small number $\delta>0$,
then the right hand side of (\ref{10.1})-(\ref{10.2})
is less than $\delta$. 

\commentout{
Let 
\[
\mu_{ij}=\lt|\sum_{l=1}^n A^*_{li}(\om)A_{lj}(\om)\rt|
\]
and 
define the random variables $X_{ij}^l, Y_{ij}^l, l=1,...,n$, as 
\beq
\label{9.1}
X_{ij}^l&=&\cos{\lt[(\xi_l(x_i-x_j)+\eta_l(y_i-y_j))\om/z_0\rt]}\\
Y_{ij}^l&=&\sin{\lt[(\xi_l(x_i-x_j)+\eta_l(y_i-y_j))\om/z_0\rt]}\label{9.2}
\eeq
and their respective sums
\beqn
\label{9}
S^n_{ij}=\sum_{l=1}^n X^l_{ij},\quad T^{n}_{ij}=\sum_{l=1}^n Y^l_{ij}.
\eeqn
Then the absolute value of the
right hand side of (\ref{1.20}) is bounded by
\beq
\label{sum}
{1\over n} \lt|S_{ij}^n+iT^n_{ij}\rt|\leq {1\over n}\lt(\lt|S^n_{ij}-\IE S^n_{ij}\rt|
+\lt|T^n_{ij}-\IE T^n_{ij}\rt|+\lt|\IE (S^n_{ij}+iT^n_{ij})\rt|
\rt).
\eeq

To estimate the right hand side of  (\ref{sum}), we recall the Hoeffding inequality
\cite{Hoe}

\begin{proposition}
Let $X_1, ..., X_n$ be independent random variables. Assume
that $X_l \in [a_l, b_l], l=1,...,n$ almost surely.
Then for $S_n=\sum_{l=1}^nX_l$ we have
\beq
\label{hoeff}
\IP\lt[\lt|S_n -\IE S_n\rt|\geq nt\rt]
\leq 2\exp{\lt[-{2n^2 s^2\over \sum_{l=1}^n (b_l-a_l)^2}\rt]}
\eeq
for all positive values of $t$. 
\end{proposition}
We apply the Hoeffding inequality to both $S^n_{ij}$ and $T^n_{ij}$. To
this end, we have $a_l=-1, b_l=1, \forall l$ 
and set
\[
t=K/\sqrt{n},\quad K>0. 
\]
Then we obtain 
\beq
\label{hoeff2}
\IP\lt[n^{-1}\lt|S^n_{ij} -\IE S^n_{ij}\rt|\geq K/\sqrt{n}\rt]
&\leq& 2e^{-{K^2/2}}\\
\IP\lt[n^{-1}\lt|T^n_{ij}-\IE T^n_{ij}\rt|\geq K/\sqrt{n}\rt]
&\leq& 2e^{-{K^2/2}}
\eeq
which means that the event  
\[
{1\over n}\lt(\lt|S^n_{ij}-\IE S^n_{ij}\rt|
+\lt|T^n_{ij}-\IE T^n_{ij}\rt|\rt)\geq 2K/\sqrt{n},\quad n\to \infty
\]
has a super-exponentially small probability for large $K$. 
}

The third term on the right hand side of (\ref{sum})
can be calculated as follows. By the mutual  independence
of $\xi_l$ and $\eta_l$ we have
\beqn
{1\over n}\lt|\IE (S_n+iT_n)\rt|&=&{1\over n}\lt|\sum_{l=1}^n\IE(X_l+iY_l)\rt|\\
&=&
{1\over n}\lt|\sum_{l=1}^n \IE \lt(e^{i\xi_l\om(x_i-x_j)/z_0}\rt)
\IE\lt(e^{i\eta_l\om(y_i-y_j)/z_0}\rt)\rt|\\
&=&\lt|\IE \lt(e^{i\xi_l\om(x_i-x_j)/z_0}\rt)
\IE\lt(e^{i\eta_l\om(y_i-y_j)/z_0}\rt)\rt|
\eeqn
since $\xi_l, \eta_l, l=1,...,n$ are independently identically
distributed.

Simple calculation with the uniform distribution on $[0,A]\times [0,A]$
yields
\beq
\lt|\IE \lt(e^{i\xi_l\om(x_i-x_j)/z_0}\rt)
\IE\lt(e^{i\eta_l\om(y_i-y_j)/z_0}\rt)\rt| 
&=&\lt|{e^{i\phi_{ij}}-1\over \phi_{ij} } \rt|
\lt|{e^{i\psi_{ij}}-1\over \psi_{ij} } \rt|\nn \\
&=& 4\lt|{\sin{{\phi_{ij}}\over 2}\over \phi_{ij}}\rt|\lt|
{\sin{\psi_{ij}\over 2}\over \psi_{ij}}\rt| \label{20.2}
\eeq
with
\[
\phi_{ij}=A\om (x_i-x_j)/z_0,\quad \psi_{ij}
=A\om (y_i-y_j)/z_0. 
\]
The optimal condition is to choose $A$ such that
\beq
\label{square-grid}
\phi_{ij}=\psi_{ij}\in 2\pi\IZ, 
\eeq
under which (\ref{20.2}) vanishes. 
Condition (\ref{square-grid})
can be fulfilled for an equally spaced grid as is assumed here.
Let
\[
\ell=\min_{i\neq j} |x_i-x_j|=\min_{i\neq j} |y_i-y_j|.
\]
The smallest $\ell$ satisfying condition (\ref{square-grid}) is given
by 
\beq
\label{mesh-size}
\ell={z_0\lambda\over A},\quad\lambda=2\pi/\om
\eeq
which can be interpreted as the  resolution  of the imaging system and 
 is equivalent  to the classical Rayleigh criterion. 

In this case, $\IE (S_n+iT_n)=0$ and hence
\[
\mu(\bA)\leq \sqrt{2}K/\sqrt{n}
\]
with probability $(1-\delta)^2$ under the condition (\ref{mesh}).

\commentout{
Let
\[
\ell_x=\min_{i\neq j} |x_i-x_j|,\quad
\ell_y=\min_{i\neq j} |y_i-y_j|
\]
and
\[
\phi=A\om \ell_x /z_0,\quad \psi=A\om \ell_y/z_0.
\]
To ensure (\ref{20.2}) sufficiently small, we 
set  
\beqn
{4\over \phi \psi}= \epsilon \ll 1
\eeqn
which   implies that for 
\beq
\label{res}
\sqrt{\ell_x\ell_y} = {2 z_0\over A \om \sqrt{\ep}}
\eeq
the pairwise coherence 
\[
\mu_{ij}=\lt|\sum_{l=1}^n A^*_{li}A_{lj}\rt|
\]
 can be bounded as
\beq
\label{deco}
\mu_{ij}\leq \epsilon+{\sqrt{2}K\over \sqrt{n}}
\eeq
with probability greater than $(1-2e^{-K^2/2})^2$.
}

\commentout{
This suggests that with fixed resolution $\ell_x, \ell_y$ $\mu(\bA)$ can be made arbitrarily small with
high probability 
by increasing the number of antennas and the aperture.
Condition (\ref{res}), however,  means that the mesh size, which
can be interpreted  as the spatial resolution, 
deteriorates from the classical Rayleigh formula
 by the factor $\ep^{-1/2}$. 
 }
\commentout{
\beq
\label{res}
{4\over \phi \psi}\sim {2K\over \sqrt{n}}
\eeq
so that all three terms on the right hand side of (\ref{sum})
have the same order of magnitude
\[
{K\over \sqrt{n}}
\]
 with probability larger
than $(1-2e^{-K^2/2})^2$. 
The condition (\ref{res})  implies that
\beq
\label{aper}
A \sim {{n}^{1/4}\over K^{1/2}} \lt({2\over \ell_x \ell_y}\rt)^{1/2}{z_0\over \om}
\eeq
under which 
\beq
\label{deco}
\mu_{ij}\leq {3K\over \sqrt{n}}
\eeq
with probability greater than $(1-2e^{-K^2/2})^2$.
}
\end{proof}

\subsection{Proof of Theorem \ref{lemma2}: spectral norm bound}\label{sec3.2}
\begin{proof}
For the proof, it suffices to show that
the matrix $\bA$ satisfies
\beq
\label{gram}
\|{n\over m}\bA \bA^*-\bI_n\|_2<1
\eeq
where $\bI_n$ is the $n\times n$ identity matrix
with the corresponding probability bound. 
By the Gershgorin circle theorem, (\ref{gram}) would in turn
follow from 
\beq
\label{19}
\mu \lt(\sqrt{n\over m}\bA^*\rt)< {1\over n-1}
\eeq
since the diagonal elements of ${n\over m}\bA \bA^*$ are
unity.

 Since $(\xi_i,\eta_i), i=1,...,n$ are uniformly distributed
 in $[0,A]\times [0,A]$, $\xi_i\neq \xi_j, \eta_i\neq \eta_j$
 with probability one. 
 
Summing over $\br_l, l=1,...,m$ we obtain 
\beq
\nn
{n\over m}\sum_{l=1}^m A_{jl}A^*_{li}
&=&{1\over m}  e^{i\om(\xi_{j}^2+\eta_{j}^2-\xi_i^2-\eta_i^2)/(2z_0)}{e^{i\om(\xi_i-\xi_j)(x_1+\sqrt{m}\ell)/z_0}-e^{i\om(\xi_i-\xi_j)x_1/z_0}\over 1-e^{i\om(\xi_i-\xi_j)\ell/z_0}}\\
&& \times
{e^{i\om(\eta_i-\eta_j)(y_1+\sqrt{m}\ell)/z_0}-e^{i\om(\eta_i-\eta_j)y_1/z_0}\over 1-e^{i\om(\eta_i-\eta_j)\ell/z_0}}.
\label{1.20'}
\eeq
Thus, 
\beq
\label{1.25}{n\over m} \lt|\sum_{l=1}^m A_{jl}A^*_{li}
\rt|
&\leq &{1\over m}
\lt|{\sin{\sqrt{m}\om (\xi_i-\xi_j)\ell\over 2 z_0}\over \sin{\om(\xi_i-\xi_j)\ell\over 2 z_0}}\rt|\lt|{\sin{\sqrt{m}\om (\eta_i-\eta_j)\ell\over 2 z_0}\over \sin{\om(\eta_i-\eta_j)\ell\over 2 z_0}}\rt|,
\eeq
where we have used 
 the identity
\beq
\label{4}
{ \lt|1-e^{i\theta}\rt|}={2\lt|\sin{\theta\over 2}\rt|}.
\eeq

Let 
\beq
\label{view}
\kappa=\min_{i\neq j}\min_{k\in \IZ }\lt\{\lt|{ \ell (\xi_i-\xi_j)\over  \lambda z_0}-k\rt|, \lt|{ \ell (\eta_i-\eta_j)\over \lambda z_0}-k\rt|\rt\}\leq 1/2
\eeq
which is nonzero with probability one.
For $i\neq j$ the random variables
\[
{\ell (\xi_i-\xi_j)\over  \lambda z_0},  \quad { \ell (\eta_i-\eta_j)\over\lambda  z_0}
\]
have the symmetric triangular distribution supported
on $[- \rho^{-1}, \rho^{-1}]$  with 
height $\rho=\lambda z_0/( \ell A)$.  Note that 
 $\rho^{-1}$ is an integer 
   by the choice (\ref{square-grid}). 
Hence the probability that $\{\kappa >\alpha\}$ for small $\alpha>0$ is larger than 
\[
(1-2\rho\alpha)^{n(n-1)} >1-2\rho n(n-1)\alpha,\quad
\rho={\lambda z_0\over \ell A}
\] 
where the power $n(n-1)$ accounts for  the number of
different pairs of random variables involved in
(\ref{view}). 

Using the inequality  that
\[
\sin{\pi\kappa} > 2\kappa,\quad \kappa\in (0,1/2), 
\]
(\ref{1.25})  and
the choice
\[
{1\over 2} \sqrt{n-1\over{ m}}=\alpha
\]
we deduce  
with probability larger than 
\[
1-2\rho n(n-1)\alpha=1-{\rho n(n-1)^{3/2}\over m^{1/2}}
\]
the decoherence  estimate 
\[
\mu\lt(\sqrt{n\over m}\bA^*\rt)< {1\over 4 m \alpha^2}
\]
 implying  (\ref{19}). 
 \end{proof}

\section{Inverse  Born scattering}
\label{sec:ibs}
In this section, we consider two imaging settings where the targets
are scatterers instead of sources under the Born approximation
(\ref{born}). 
   
\subsection{Response matrix (RM) imaging}\label{sec:rm}
For the coherence calculation, we have
\beq\label{35-2}
\sum_{l=1}^{n^2} A^{\rm RM*}_{li}A^{\rm RM}_{lj}
&=&\sum_{p,q=1}^n G(\ba_p,\br_j)G(\br_j,\ba_q)G^*(\ba_p,\br_i)G^*(\br_i,\ba_q)\\
&=&\lt[\sum_{p=1}^n G(\ba_p,\br_j)G^*(\ba_p,\br_i)\rt]^2\nn
\eeq
and thus
\[
\mu\lt(\bA^{\rm RM}\rt)=\mu^2(\bA).
\]
In view of (\ref{35-2})
and Theorem \ref{thm3} the following theorem is automatic.
\begin{theorem}
\label{thm5}
Under  the assumptions (\ref{aperture}) and (\ref{mesh}) 
the coherence  of $\bA$ satisfies
\[
\mu(\bA^{\rm RM})\leq 2K^2/{n}
\]
with probability greater than  $(1-\delta)^2$. 
\end{theorem}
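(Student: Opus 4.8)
The plan is to reduce the statement to Theorem~\ref{thm3} via the identity $\mu(\bA^{\rm RM})=\mu^2(\bA)$, which has just been established in~\eqref{35-2}. First I would observe that the entries of $\bA^{\rm RM}$, being products of two paraxial Green functions of equal magnitude, again all have the same modulus, so after the usual normalization the columns of $\bA^{\rm RM}$ are unit vectors and the coherence $\mu(\bA^{\rm RM})$ is literally the maximum over $i\neq j$ of $|\sum_{l=1}^{n^2} A^{\rm RM*}_{li}A^{\rm RM}_{lj}|$ divided by the (unit) column norms. By the computation in~\eqref{35-2} this inner product equals $\bigl(\sum_{p=1}^n G(\ba_p,\br_j)G^*(\ba_p,\br_i)\bigr)^2$, i.e. the square of the corresponding (normalized) Gram entry of $\bA$; hence taking absolute values and maximizing over $i\neq j$ gives $\mu(\bA^{\rm RM})=\mu^2(\bA)$ exactly, with no loss.

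Next I would invoke Theorem~\ref{thm3}: under the aperture condition~\eqref{aperture} and the mesh condition~\eqref{mesh}, we have $\mu(\bA)\leq \sqrt{2}K/\sqrt{n}$ on an event of probability greater than $(1-\delta)^2$. On that same event, squaring both sides yields
\beq
\mu(\bA^{\rm RM})=\mu^2(\bA)\leq \frac{2K^2}{n},
\eeq
which is precisely the claimed bound. Since no new randomness is introduced — the sensor ensemble is the same and the RM sensing matrix is a deterministic function of it — the probability bound $(1-\delta)^2$ carries over verbatim, and there is nothing further to estimate.

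There is essentially no obstacle here: the content of the theorem is entirely in the algebraic factorization~\eqref{35-2}, which exploits the separable structure $A^{\rm RM}_{lj}=G(\ba_i,\br_j)G(\br_j,\ba_k)$ with $l\leftrightarrow(i,k)$, so that the double sum over $p,q$ decouples into the square of a single sum over $p$. The only point deserving a line of care is to confirm that the normalization constants match — that each column of $\bA^{\rm RM}$ has the same $\ell^2$-norm (true because every entry has modulus $(4\pi z_0)^{-2}$), so that the ratio defining $\mu(\bA^{\rm RM})$ is genuinely the square of the ratio defining $\mu(\bA)$ rather than merely bounded by it. Given that, the theorem follows immediately, which is why it is labeled ``automatic'' in the text.
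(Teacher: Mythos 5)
Your proposal is correct and follows exactly the paper's route: the factorization in~\eqref{35-2} gives $\mu(\bA^{\rm RM})=\mu^2(\bA)$, and squaring the bound of Theorem~\ref{thm3} on the same probability-$(1-\delta)^2$ event yields the claim. Your extra remark checking that the column normalizations square consistently is a sensible detail the paper leaves implicit, but it does not change the argument.
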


We now proceed to establish the counterpart of Theorem \ref{lemma2}. 
\begin{theorem}
\label{thm6}
The matrix $\bA^{\rm RM}$ has full rank and 
its spectral norm satisfies the bound
\beq
\label{norm6}
\| \bA^{\rm RM}\|_2^2\leq 2m/n^2
\eeq
 with probability greater than  or equal to 
\[
1-{\rho n^2(n^2-1)^{3/2}\over m^{1/2}},\quad \rho={\lambda z_0\over \ell A}.
\] 
\end{theorem}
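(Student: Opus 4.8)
The plan is to follow the proof of Theorem \ref{lemma2} almost line for line; the only genuinely new feature is the four-fold character of the frequencies produced by the product structure of $\bA^{\rm RM}$. As in (\ref{gram})--(\ref{19}), it suffices to prove $\|\tfrac{n^2}{m}\bA^{\rm RM}(\bA^{\rm RM})^{*}-\bI_{n^2}\|_2<1$, which delivers simultaneously the full rank of $\bA^{\rm RM}$ and the spectral bound (\ref{norm6}). Because the paraxial Green function has constant modulus and the columns of $\bA^{\rm RM}$ are $\ell^2$-normalized, all diagonal entries of $\tfrac{n^2}{m}\bA^{\rm RM}(\bA^{\rm RM})^{*}$ equal $1$, so by the Gershgorin circle theorem it is enough to establish the row-coherence estimate $\mu\bigl(\sqrt{n^2/m}\,(\bA^{\rm RM})^{*}\bigr)<(n^2-1)^{-1}$, the exact analogue of (\ref{19}) with $n$ replaced by $n^2$.

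For the off-diagonal entries I would index the $n^2$ rows by antenna pairs, $l\leftrightarrow(i,k)$ and $l'\leftrightarrow(p,q)$, and sum over the target points $\br_j$. The four paraxial phases contribute $x_j^2$- and $y_j^2$-terms that cancel exactly as in Section \ref{sec3.2}, leaving a product of two geometric (Dirichlet-kernel) sums over the $\sqrt m\times\sqrt m$ grid, now carried by the frequencies $\om\ell\zeta/z_0$ and $\om\ell\tilde\zeta/z_0$ with $\zeta=\xi_i+\xi_k-\xi_p-\xi_q$ and $\tilde\zeta=\eta_i+\eta_k-\eta_p-\eta_q$. Summing these geometric series and using the identity (\ref{4}) gives the counterpart of (\ref{1.25}),
\[
\tfrac{n^2}{m}\Bigl|\sum_{j=1}^{m}A^{\rm RM}_{lj}\bigl(A^{\rm RM}_{l'j}\bigr)^{*}\Bigr|\ \le\ \tfrac1m\,\Bigl|\tfrac{\sin(\sqrt m\,\om\zeta\ell/2z_0)}{\sin(\om\zeta\ell/2z_0)}\Bigr|\;\Bigl|\tfrac{\sin(\sqrt m\,\om\tilde\zeta\ell/2z_0)}{\sin(\om\tilde\zeta\ell/2z_0)}\Bigr|.
\]
Then, exactly as after (\ref{view}), I would set $\kappa=\min\min_{k\in\IZ}\{|\ell\zeta/(\lambda z_0)-k|,\ |\ell\tilde\zeta/(\lambda z_0)-k|\}$ over all ordered row pairs $l\ne l'$, use $\sin(\pi\kappa)>2\kappa$ on $(0,1/2)$ together with the choice $\alpha=\tfrac12\sqrt{(n^2-1)/m}$ for the event $\{\kappa>\alpha\}$, and conclude $\mu\bigl(\sqrt{n^2/m}\,(\bA^{\rm RM})^{*}\bigr)<\tfrac1{4m\alpha^2}=(n^2-1)^{-1}$. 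A union bound over the $n^2(n^2-1)$ ordered pairs, each excluded with probability $O(\rho\alpha)$, then yields the asserted probability $1-\rho n^2(n^2-1)^{3/2}/m^{1/2}$.

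The step I expect to be the real obstacle --- the only place where this is not a mechanical transcription of Section \ref{sec3.2} --- is controlling the law of the four-fold combination $\zeta=\xi_i+\xi_k-\xi_p-\xi_q$ near the integer lattice. Unlike the two-fold difference $\xi_i-\xi_j$, which has the clean triangular density of height $\rho$ used in Section \ref{sec3.2}, here the density depends on how many of the four indices coincide, and, more seriously, $\zeta$ is \emph{identically} zero for the reciprocal pair $(p,q)=(k,i)$ with $i\ne k$, which by reciprocity of $G$ is a repeated row of $\bA^{\rm RM}$. Such degenerate pairs must first be disposed of --- for instance by treating the transmit and receive channels $\ba_i\to\br_j$, $\br_j\to\ba_k$ as formally distinct, or by restricting the row label to $i\le k$ --- and for every remaining pair one must check that the density of $\ell\zeta/(\lambda z_0)$ stays $O(\rho)$ throughout a neighbourhood of $\IZ$, so that the union bound over the $O(n^4)$ pairs still closes with $\alpha\asymp\sqrt{n^2/m}$. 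Granting this distributional lemma, the remaining arguments are precisely those of Sections \ref{sec3.1}--\ref{sec3.2}.
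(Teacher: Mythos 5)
Your proposal follows the paper's proof of Theorem \ref{thm6} essentially line for line: the same Gershgorin reduction to the row-coherence bound $\mu(\sqrt{n^2/m}\,\bA^{\rm RM*})<(n^2-1)^{-1}$, the same Dirichlet-kernel evaluation of the off-diagonal Gram entries with the four-fold frequencies $\xi_i+\xi_k-\xi_{i'}-\xi_{k'}$, the same anti-concentration argument for $\kappa$ with $\alpha=\frac12\sqrt{(n^2-1)/m}$, and the same union bound over the $n^2(n^2-1)$ ordered row pairs. The distributional fact you defer to a lemma is exactly what the paper asserts: for non-degenerate pairs the law of $\ell(\xi_i+\xi_k-\xi_{i'}-\xi_{k'})/(\lambda z_0)$ is either the symmetric triangular density of height $\rho$ or its self-convolution, both bounded by $\rho$ (the latter by $2\rho/3$), so that part closes.

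The obstacle you single out, however, is not merely a technical worry --- it is a genuine gap that the paper's own proof passes over. For the reciprocal pair $l\leftrightarrow(i,k)$, $l'\leftrightarrow(k,i)$ with $i\neq k$, the symmetry of $G$ makes the two rows of $\bA^{\rm RM}$ identical, the frequency $\xi_i+\xi_k-\xi_{i'}-\xi_{k'}$ vanishes identically, and the corresponding normalized Gram entry equals $1$ deterministically; the paper's claim that $\kappa$ in (\ref{view2}) is nonzero with probability one is therefore false. As a consequence $\|\frac{n^2}{m}\bA^{\rm RM}\bA^{\rm RM*}-\bI_{n^2}\|_2\geq 1$ always, the rank of $\bA^{\rm RM}$ is at most $n(n+1)/2<n^2$, and the theorem as literally stated cannot hold. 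Your proposed repair --- restrict the row index to $i\le k$ (equivalently, keep only the distinct transmit--receive pairs of the symmetric response matrix) and rerun the identical argument over the remaining $O(n^4)$ non-degenerate pairs --- is the correct one; it preserves the $n^2$-order number of effective measurements and hence the conclusions of Theorem \ref{thm:rm} up to constants. So your attempt is the paper's argument plus the observation needed to make it correct.
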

\begin{remark}\label{rmk3}
As in Remark \ref{rmk2}, 
Theorems \ref{thm5} and \ref{thm6} imply  that
with probability  greater than
\[
1-2\delta- {\rho n^2(n^2-1)^{3/2}\over m^{1/2}}
\]
of the sensor ensemble, {\bf all} targets of sparsity
\[
s<{1\over 2} (1+{{n}\over {2} K^2})
\]
can be recovered exactly by BP as well as OMP.

\end{remark}

\begin{proof} We proceed as in the proof of Theorem \ref{lemma2}. As before, we seek  to prove
\beq
\label{19'}
\mu \lt({n\over \sqrt{ m}}\bA^{\rm RM*}\rt)< {1\over n^2-1}. 
\eeq

For the RM setting, (\ref{1.20}) becomes
\beq
\nn
{n^2\over m}\sum_{j=1}^m A^{\rm RM}_{jl}A^{\rm RM*}_{jl'}
&=&{1\over m}  e^{i\om(\xi_{k}^2+\eta_{k}^2+\xi_i^2+\eta_i^2-\xi_{k'}^2-\eta_{k'}^2-\xi_{i'}^2-\eta_{i'}^2)/(2z_0)}\\
&&\times {e^{i\om(\xi_i+\xi_{k}-\xi_{i'}-\xi_{k'})(x_1+\sqrt{m}\ell)/z_0}-e^{i\om(\xi_i+\xi_{k}-\xi_{i'}-\xi_{k'})x_1/z_0}\over 1-e^{i\om(\xi_i+\xi_{k}-\xi_{i'}-\xi_{k'})\ell/z_0}}\nn\\
&& \times
{e^{i\om(\eta_i+\eta_{k}-\eta_{i'}-\eta_{k'})(y_1+\sqrt{m}\ell)/z_0}-e^{i\om(\eta_i+\eta_k-\eta_{i'}-\eta_{k'})y_1/z_0}\over 1-e^{i\om(\eta_i+\eta_k-\eta_{i'}-\eta_{k'})\ell/z_0}}
\label{2.20}
\eeq
where $l=i(n-1)+k, l'=i'(n-1)+k'.$

We apply the same analysis as  (\ref{1.20}) here. 
Let 
\beq
\label{view2}
\kappa=\min_{l\neq l'}\min_{k\in \IZ }\lt\{\lt|{ \ell (\xi_i+\xi_k-\xi_{i'}-\xi_{k'})\over  \lambda z_0}-k\rt|, \lt|{ \ell (\eta_i+\eta_k-\eta_{i'}-\eta_{k'})\over \lambda z_0}-k\rt|\rt\}.
\eeq
which is nonzero with probability one.
For $l\neq l'$ the probability density functions  (PDF) for the random variables
\[
{ \ell (\xi_i+\xi_k-\xi_{i'}-\xi_{k'})\over  \lambda z_0},\quad
{ \ell (\eta_i+\eta_k-\eta_{i'}-\eta_{k'})\over \lambda z_0}
\]
are either  the symmetric triangular distribution or
its self-convolution supported
on $[- 2\rho^{-1},  2\rho^{-1}]$. 
In either case, their PDFs are bounded by $\rho$ (indeed, by $2\rho/3$). 
Hence the probability that $\{\kappa >\alpha\}$ for small $\alpha>0$ is larger than 
\[
(1-2\rho\alpha)^{n^2(n^2-1)} >1-2\rho n^2(n^2-1).
\] 
With the choice
\[
{1\over 2} \sqrt{n^2-1\over{ m}}=\alpha
\]
we deduce that 
\[
\mu\lt(\sqrt{n^2\over m}\bA^{\rm RM*}\rt)<{1\over n^2-1}
\]
with probability larger than 
\[
1-2\rho n^2(n^2-1)\alpha=1-{\rho n^2(n^2-1)^{3/2}\over m^{1/2}}. 
\]

\end{proof}

As before, the above estimates yield
the following result.

\begin{theorem}
\label{thm:rm}
Consider the response matrix imaging with the target vector  randomly drawn from
the target ensemble and the antenna array randomly
drawn from the sensor ensemble.  If (\ref{aperture}) and
(\ref{m-1}) hold
then
 the targets of sparsity up to 
\beq
\label{spark2}
{n^2\over 64 \ln{2m\over \delta}\ln{m\over \ep}}
\eeq
can be recovered exactly by BP with
probability greater than or equal to (\ref{prob}).

\end{theorem}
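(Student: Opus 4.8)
The plan is to mimic verbatim the argument that derives Theorem \ref{thm-is} from Theorems \ref{lemma1} and \ref{lemma2}, now feeding in the RM-specific inputs Theorems \ref{thm5} and \ref{thm6} in place of them. First I would invoke Theorem \ref{tropp} with the matrix $\bA^{\rm RM}$, which is legitimate because the target ensemble is the same product ensemble and $\bA^{\rm RM}$ has $n^2$ rows and $m$ columns. The coherence hypothesis (\ref{M}) becomes $\mu^2(\bA^{\rm RM})\,s\le(8\ln(m/\ep))^{-1}$, and by Theorem \ref{thm5} we have $\mu(\bA^{\rm RM})\le 2K^2/n$ with probability $\ge(1-\delta)^2$ under (\ref{aperture}) and (\ref{mesh}); substituting, it suffices that $4K^4 s/n^2\le(8\ln(m/\ep))^{-1}$, i.e. $K^4< n^2/(32\,s\,\ln(m/\ep))$. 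Combined with (\ref{mesh}), which forces $K^2>2\ln(2m/\delta)$, the sparsity bound (\ref{spark2}), namely $s<n^2/(64\ln(2m/\delta)\ln(m/\ep))$, is exactly what guarantees the existence of an admissible $K$ (one takes $K^2$ strictly between $2\ln(2m/\delta)$ and $n^2/(32s\ln(m/\ep))$, whose ratio exceeds $2$ precisely by (\ref{spark2})). This parallels the passage through (\ref{K}) in the proof of Theorem \ref{thm-is}.

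Next I would handle the second hypothesis (\ref{Op}) of Theorem \ref{tropp}. By Theorem \ref{thm6}, $\|\bA^{\rm RM}\|_2^2\le 2m/n^2$ with probability $\ge 1-\rho n^2(n^2-1)^{3/2}/m^{1/2}$, so $(s/m)\|\bA^{\rm RM}\|_2^2\le 2s/n^2$, and (\ref{Op}) follows from the RM-analogue of (\ref{Op2}), namely
\[
3\Bigl(\frac{p\ln s}{2\ln(m/\ep)}\Bigr)^{1/2}+\frac{2s}{n^2}\le\frac{1}{4e^{1/4}},\qquad p>1.
\]
As in the source case, for $m\gg s$ the term $2s/n^2$ is negligible and one can choose $p$ of the form $p=(\ln m-\ln\ep)/(C\sqrt e\,\ln s)$ with an absolute constant $C$ (the same bookkeeping that produced $p=(\ln m-\ln\ep)/(288\sqrt e\ln s)$ in (\ref{prob}) goes through with at most a change of constant), so that the displayed inequality holds. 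At this point all hypotheses of Theorem \ref{tropp} are verified on the intersection of the good events for coherence and spectral norm.

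Finally I would assemble the probability. Theorems \ref{thm5} and \ref{thm6} hold simultaneously with probability at least $1-2\delta-\rho n^2(n^2-1)^{3/2}/m^{1/2}$ over the sensor ensemble (using $(1-\delta)^2\ge 1-2\delta$), and on that event Theorem \ref{tropp} gives exact BP recovery with probability $1-2\ep-s^{-p}$ over the independent target ensemble; multiplying the two independent probabilities yields the bound (\ref{prob}), exactly as claimed. The one genuinely non-routine point, and the place I would be most careful, is confirming that the constants line up so that the \emph{same} sparsity threshold (\ref{spark2}) and the \emph{same} probability expression (\ref{prob}) as in Theorem \ref{thm-is} come out — in particular that replacing $n$ by $n^2$ throughout the coherence bound ($\mu\le\sqrt2 K/\sqrt n$ becomes $\mu\le 2K^2/n$, i.e. effectively $\mu(\bA^{\rm RM})^2$ scales like $1/n^2$) is precisely what upgrades the recoverable sparsity from $\cO(n/(\ln m)^2)$ to $\cO(n^2/(\ln m)^2)$, so no factor of $n$ is lost in the spectral-norm step; this is why Theorem \ref{thm6} is stated with the sharper $2m/n^2$ rather than $2m/n$.
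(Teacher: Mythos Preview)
Your approach is exactly the paper's---its entire proof reads ``As before, the above estimates yield the following result.'' One arithmetic slip to flag: after correctly deriving $K^4<n^2/(32s\ln(m/\ep))$ from Theorem~\ref{thm5}, you then place $K^2$ in the interval $\bigl(2\ln(2m/\delta),\,n^2/(32s\ln(m/\ep))\bigr)$; but the upper endpoint for $K^2$ should be the square root $n/\sqrt{32s\ln(m/\ep)}$, so the feasibility condition on $K$ actually yields $s<n^2/\bigl(128(\ln(2m/\delta))^2\ln(m/\ep)\bigr)$, one logarithmic factor tighter than (\ref{spark2}) as literally stated. The paper's one-line proof glosses over this same bookkeeping, and the headline $n^2$ scaling is unaffected. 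Likewise, the sensor-ensemble probability you assemble via Theorem~\ref{thm6} carries $\rho n^2(n^2-1)^{3/2}/m^{1/2}$, which is stricter than the $\rho n(n-1)^{3/2}/m^{1/2}$ literally appearing in (\ref{prob}); the paper evidently reuses the label (\ref{prob}) loosely here.
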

\subsection{Synthetic aperture (SA) imaging}\label{sec:sa}
In view of (\ref{51}), 
we obtain
\[
\mu \lt(\bA^{\rm SA}(\om)\rt)=\mu(\bA(2\om)).
\]

The following
result is an immediate consequence of
 the  correspondence (\ref{50})-(\ref{51}) between SA imaging and inverse source setting. 
\begin{theorem}
\label{thm-sa}
 Let the target vector be randomly drawn from
the target ensemble and the antenna array be randomly
drawn from the sensor ensemble.  If
\beq
\label{49-3}
{2\over \rho} \in \IN
\eeq
and (\ref{m-1}) hold 
then the targets of sparsity up to
\beq
\label{spark3}
{n\over 64 \ln{2m\over \delta}\ln{m\over \ep}}
\eeq
can be recovered exactly by BP with
probability greater than or equal to (\ref{prob}). 
\end{theorem}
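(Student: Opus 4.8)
The plan is to reduce Theorem \ref{thm-sa} to Theorem \ref{thm-is} via the frequency-doubling correspondence (\ref{51}). Writing out the paraxial Green function (\ref{8-3}) one sees directly that $G^2(\ba_i,\br_j;\om)=(4\pi z_0)^{-1}G(\ba_i,\br_j;2\om)$, so the SA sensing matrix $\bA^{\rm SA}(\om)$ coincides, up to the single scalar $(4\pi z_0)^{-1}$ that multiplies every entry alike, with the source-inversion sensing matrix built at wavenumber $2\om$, i.e. at wavelength $\lambda/2$. Since the coherence $\mu(\cdot)$ is a ratio of inner products over norms it is invariant under such an overall rescaling, and since column vectors are by convention normalized to unit $\ell^2$-norm the rescaling is immaterial for the spectral norm as well. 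Hence it suffices to rerun the analysis of Sections \ref{sec3.1}--\ref{sec3.2} with $\lambda$ replaced by $\lambda/2$.

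I would first record how the parameters transform. The aperture/grid compatibility condition (\ref{aperture}), namely $\ell A/(\lambda z_0)\in\IN$, becomes $\ell A/((\lambda/2)z_0)\in\IN$, i.e. $2/\rho\in\IN$, which is exactly (\ref{49-3}); this is implied by (\ref{aperture}) but strictly weaker. Under $2/\rho\in\IN$ the Rayleigh cancellation (\ref{square-grid}) holds at wavenumber $2\om$, so in the coherence computation the expectation term $\IE(S_n+iT_n)$ again vanishes, and the Hoeffding step goes through unchanged (the summands still lie in $[-1,1]$). Thus Theorem \ref{lemma1} gives, under (\ref{m-1}), the bound $\mu(\bA^{\rm SA})\le\sqrt2\,K/\sqrt n$ with probability at least $(1-\delta)^2$, with the same constants as in the source problem. (In particular, as in Remark \ref{rmk2}, the coherence bound alone already yields exact recovery by BP and OMP of \emph{all} targets of sparsity $O(\sqrt n)$ on the sensor ensemble.)

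Next I would transcribe the spectral-norm estimate of Section \ref{sec3.2} with $\om\mapsto2\om$. The relevant random variables become $2\ell(\xi_i-\xi_j)/(\lambda z_0)$ and $2\ell(\eta_i-\eta_j)/(\lambda z_0)$, which have symmetric triangular density supported on $[-2/\rho,2/\rho]$ with peak value $\rho/2\le\rho$. Consequently the probability that the quantity $\kappa$ analogous to (\ref{view}) exceeds $\alpha$ is at least $(1-\rho\alpha)^{n(n-1)}$, and the choice $\alpha=\tfrac12\sqrt{(n-1)/m}$ gives, exactly as in Theorem \ref{lemma2}, full rank of $\bA^{\rm SA}$ and $\|\bA^{\rm SA}\|_2^2\le 2m/n$ with probability at least the sensor-ensemble factor in (\ref{prob}) (indeed with a slightly better constant, since $\rho/2<\rho$). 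Feeding $\mu\le\sqrt2\,K/\sqrt n$ and $\|\bA^{\rm SA}\|_2^2\le2m/n$ into Tropp's Theorem \ref{tropp} precisely as in the proof of Theorem \ref{thm-is}: the admissible $K$ makes (\ref{M}) hold, the norm bound turns (\ref{Op}) into (\ref{Op2}), and both are consequences of the sparsity bound (\ref{spark3}), which has the same form as (\ref{spark}). Multiplying by the independent target-ensemble probability $1-2\ep-s^{-p}$ yields the bound (\ref{prob}).

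The main point to be careful about -- and essentially the only nonroutine step -- is verifying that the prefactor in $G^2=(4\pi z_0)^{-1}G(\cdot;2\om)$ is a genuine global constant, independent of the indices $i,j$, so that it is annihilated by column normalization and cancels in $\mu$; and, relatedly, that the compatibility condition genuinely relaxes from $1/\rho\in\IN$ to $2/\rho\in\IN$ once the wavenumber is doubled. There is no new probabilistic or analytic obstacle: everything else is a direct re-reading of Sections \ref{sec3.1}--\ref{sec3.2} with $\lambda$ halved.
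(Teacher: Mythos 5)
Your proposal is correct and follows exactly the route the paper takes: Theorem \ref{thm-sa} is obtained from the correspondence (\ref{50})--(\ref{51}), i.e.\ $G^2(\cdot;\om)$ equals $G(\cdot;2\om)$ up to a global constant, so the source-inversion analysis of Sections \ref{sec3.1}--\ref{sec3.2} applies verbatim with $\lambda\mapsto\lambda/2$, turning the aperture condition $1/\rho\in\IN$ into the weaker $2/\rho\in\IN$. You simply spell out the details (invariance of $\mu$ and of the normalized columns under the scalar prefactor, the rescaled triangular density in the spectral-norm step) that the paper leaves implicit in its one-line ``immediate consequence'' argument.
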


\begin{remark}\label{rmk4}
As in Remark \ref{rmk2}, conditions  (\ref{mesh}) and (\ref{49-3})  imply that 
with probability  greater than
\[
1-2\delta- {\rho n(n-1)^{3/2}\over m^{1/2}}
\]
of the sensor ensemble, {\bf all} targets of sparsity
\[
s<{1\over 2} (1+{\sqrt{n}\over \sqrt{2} K})
\]
can be recovered exactly by BP as well as OMP.

\end{remark}

\section{Numerical simulations }\label{sec:num}
In the simulations, we set $z_0=10000$ and for the most
part $\lambda=0.1$ to enforce the second condition of
the paraxial regime (\ref{6-2}). The computational domain is 
$[-250, 250]\times [-250, 250]$ with mesh-size $\ell=10$. 
The threshold, optimal aperture according to Theorem~\ref{thm3} is $A=100$. As a result, the first condition of
the paraxial regime (\ref{6-2}) is also enforced. 
Note that the Fresnel number for this setting is
\[
{(A+L)^2\over z_0\lambda}=360\gg 1
\]
indicating that this is not the Fraunhofer  diffraction regime
and the Fourier approximation of the paraxial Green function is
not appropriate  \cite{BW}.

We use the 
true Green
function 
(\ref{green}) in the direct simulations
and its paraxial approximation for inversion. In other words, we allow model 
 mismatch between the propagation  and inversion steps.
 The degradation in performance can be seen in the figures 
 but is still manageable as the simulations are firmly in
 the Fresnel diffraction regime. 
 The stability of BP  with  linear model mismatch
 has been analyzed in~\cite{HS09} for the case when the matrix satisfies the Restricted  Isometry Property (RIP), see
 Appendix \ref{app:rip}.

 In the left plot of Figure \ref{fig1}, the coherence  is calculated
 with aperture $A\in [10, 200]$ and $n=100$ for
 the sensing matrices  with the exact Green function (red-solid curve) as entries 
 and its paraxial approximation (black-asterisk  curve).  The coherence of
 the exact sensing matrix at the borderline of the paraxial regime with $z_0=1000, \lambda=1$ is also calculated (blue-dashed  curve). All three curves track one another closely and 
 flatten near and beyond  $A=100$ in agreement with
 the theory (Theorem \ref{thm3}), indicating
 the validity of the optimal aperture throughout the
 paraxial regime. 
 
 Figure \ref{fig1} (right plot)  displays the numerically found maximum number
 of recoverable source points as a function of $n$ with $A=100$ by using the exact (red-solid curve) and paraxial (black-asterisk curve) sensing matrices.  The maximum number of recoverable targets (MNRT) is in principle a random variable 
 as our theory is formulated in terms of the target
 and sensor ensembles. To compute MNRT,
 we start with one target point and  apply  the sensing scheme.
 If the recovery is (nearly) perfect a new target vector with one additional support is randomly drawn and the sensing scheme is
rerun. We iterate this process until the sensing scheme fails
to recover the targets and then we record the target support in
the previous iterate as MNRT. {\em This is an one-trial test and
no averaging is applied.} 
The linear profile in the right plot of Figure \ref{fig1} is  consistent with the prediction (\ref{spark}) of Theorem \ref{thm-is}. 

  \begin{figure}[t]
\begin{center}
\includegraphics[width=0.44\textwidth]{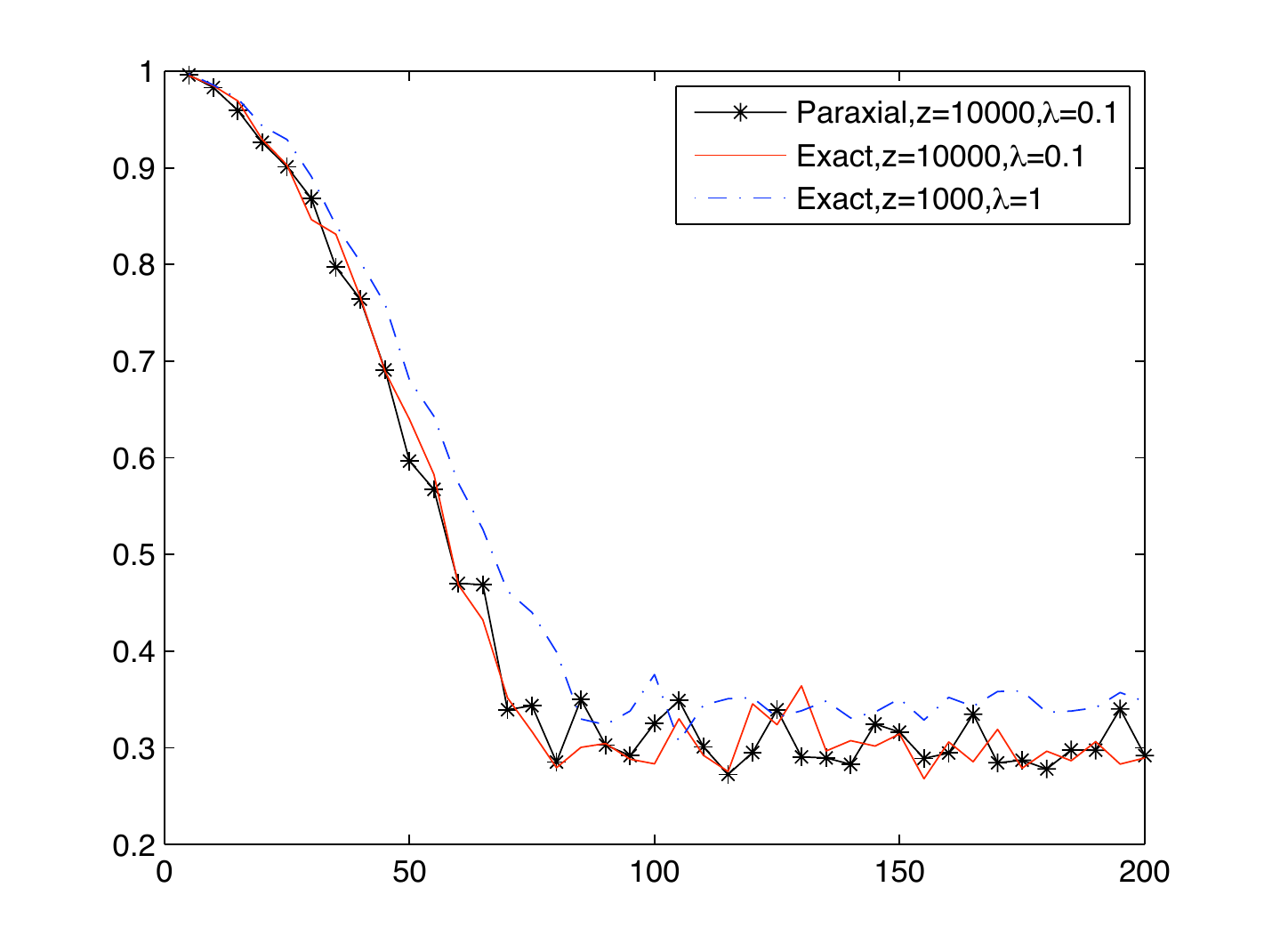}
  \includegraphics[width=0.44\textwidth]{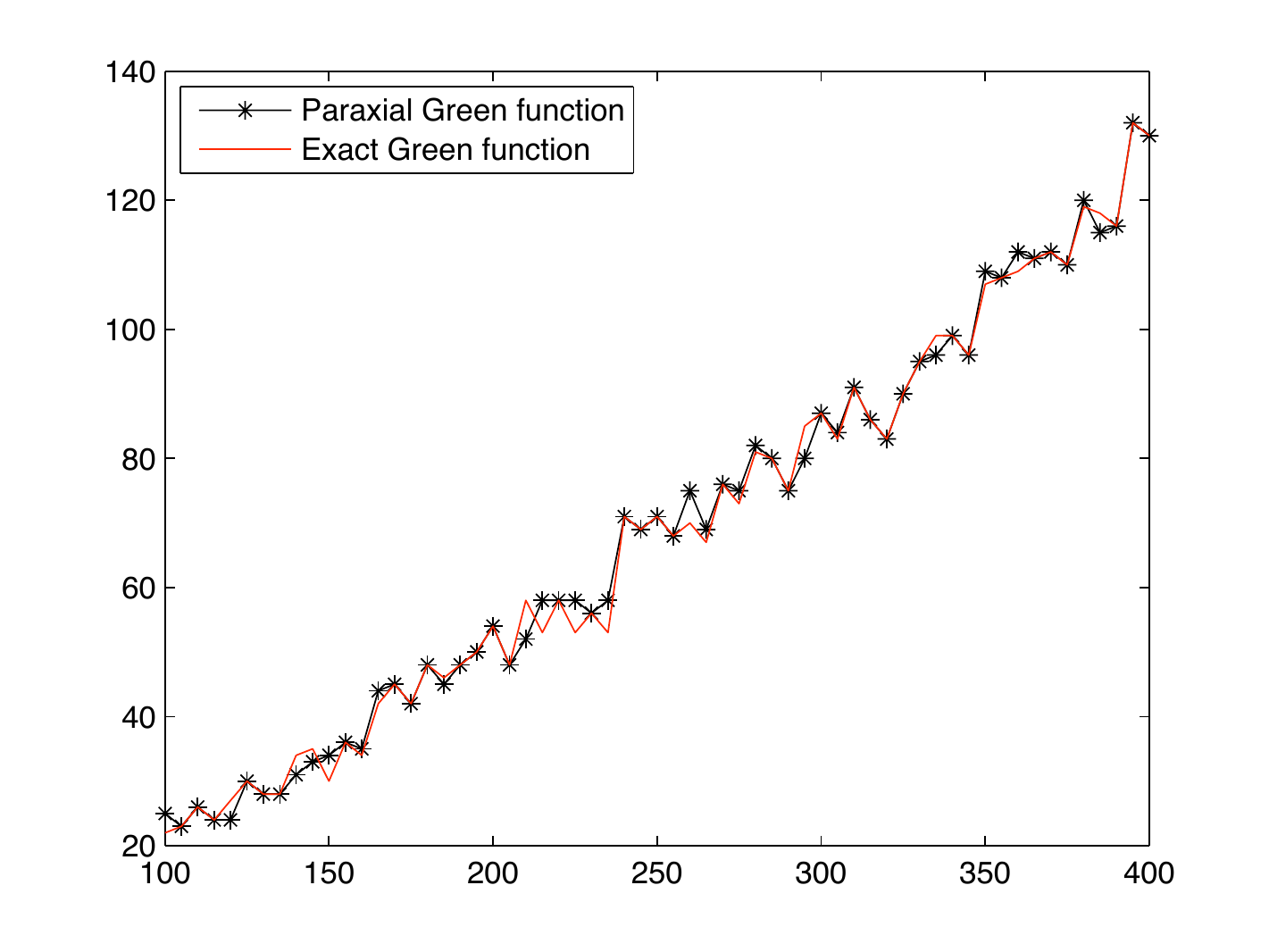}
\end{center}
\caption{(Left) The red-solid  and black-asterisk curves are, respectively,  the 
coherence for the exact and paraxial sensing matrices
for $z_0=10000, \lambda=0.1, n=100$  as a function
of aperture. The blue-dashed  curve is the coherence for the exact
sensing matrix for $z_0=1000, \lambda=1, n=100$; (Right) the empirical, maximum number of recoverable sources  with
 $|\sigma|=1$ v.s. the number $n$
 of antennas for $A=100, \lambda=1$ by using the paraxial (black-asterisk)
 and exact (red-solid) sensing matrices. }
 \label{fig1}
 \end{figure}
 
 To reduce the computational complexity of the compressed
 sensing step, we use an iterative scheme called
 {\em Subspace Pursuit} (SP) \cite{DM}. It has been shown
 to yield the BP solution under the RIP  \cite{DM} (see
 Appendix \ref{app:rip}.

In the {scattering simulation}, we use the Foldy-Lax formulation
 accounting for all the multiple scattering effect \cite{mfirm-ip}.
 Hence there are two mismatches   (the paraxial approximation
 and the Born approximation) in the simulation. 

 In the left plot of Figure \ref{fig-scatt2} the compressed sensing image with
 RM set-up is shown
 for $A=100$ and $n=20$. The size of the sensing matrix is
 $400\times 2500$ and $35$ targets are (nearly) exactly recovered. 
 For comparison, the image obtained by
 the linear processor of 
 the traditional matched field processing  is  
 shown on the right. In Appendix \ref{sec:A},
 we outline the rudiments  of matched field processing. 
 \begin{figure}[t]
\begin{center}
\includegraphics[width=0.47\textwidth]{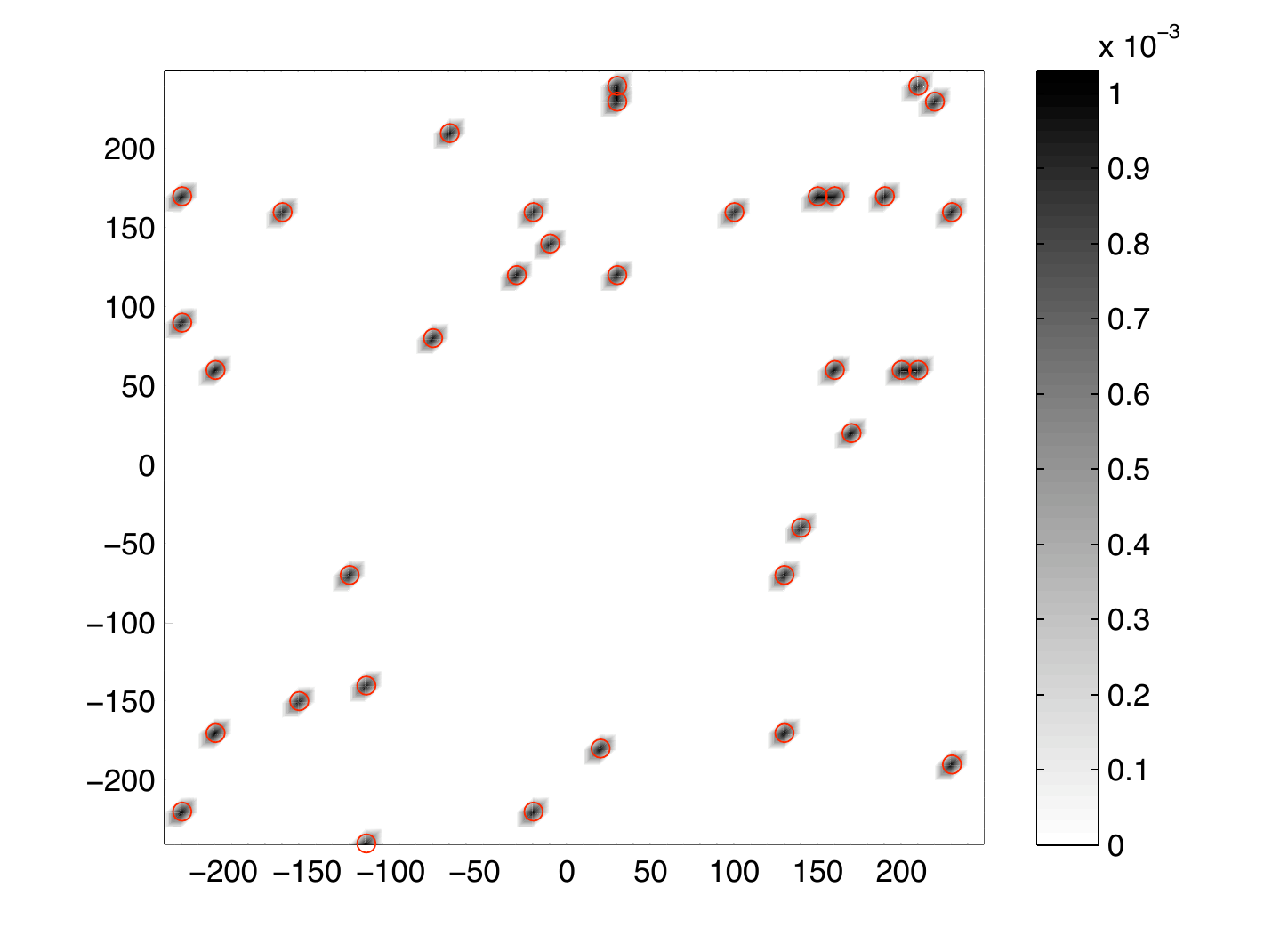}
\includegraphics[width=0.47\textwidth]{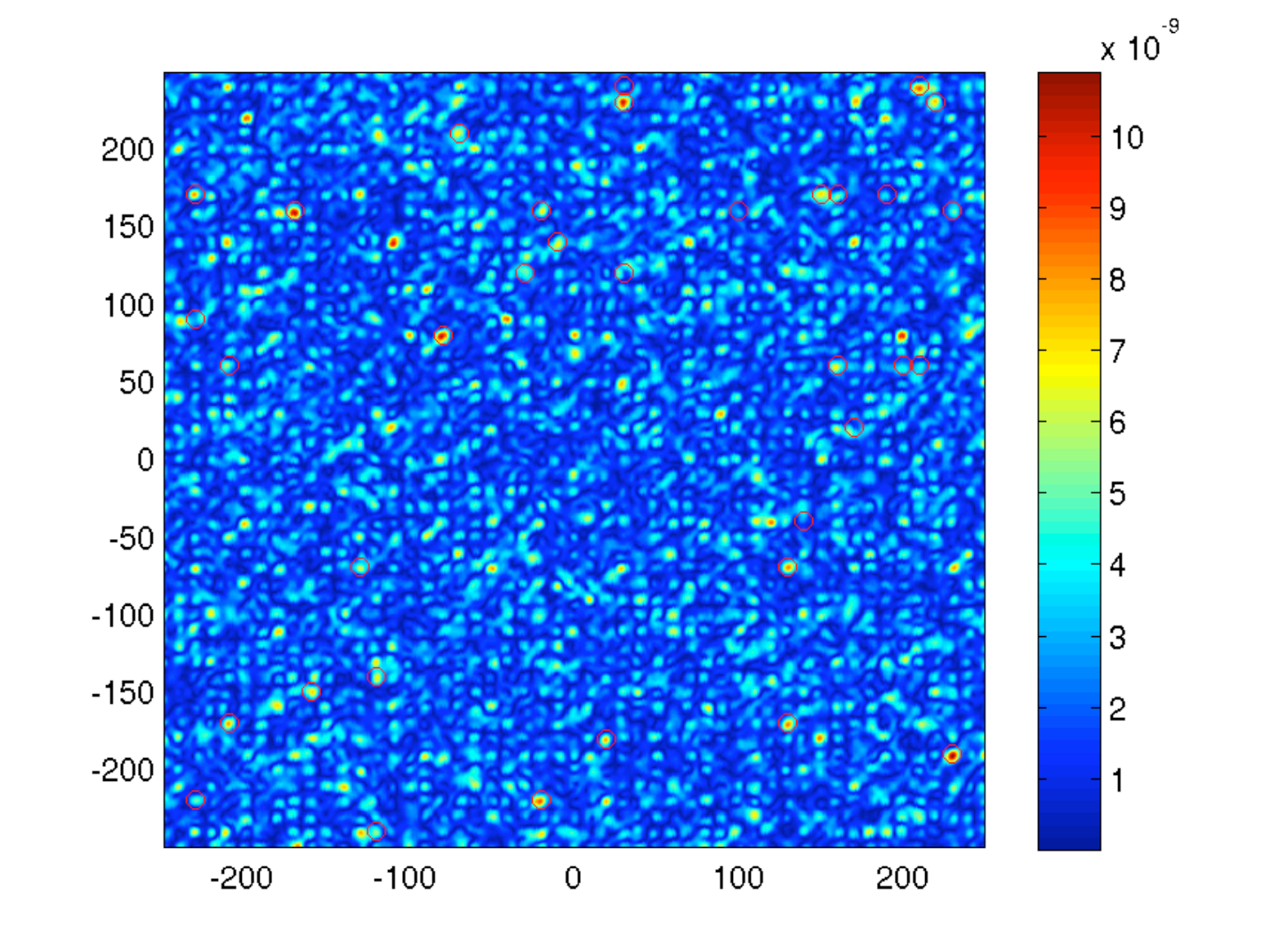}
\end{center}
 \caption{(Left) 35 scatterers are perfectly recovered by
 compressed sensing technique
 with $20$ antennas.    The red circles represent the true locations of the targets. The plot on the right is 
 produced by the conventional matched field processing. }
 \label{fig-scatt2}
 \end{figure}

In Figure \ref{fig-scatt} 
the numerically found maximum number of recoverable scatterers is depicted as
a function of the number of antennas  for $A=100$
and  for both RM and SA imaging set-ups by using
the paraxial and exact sensing matrices.  Clearly,
both curves are qualitatively consistent with the predictions
(\ref{spark2}) and (\ref{spark3}).

\begin{figure}[t]
\begin{center}
\includegraphics[width=0.45\textwidth]{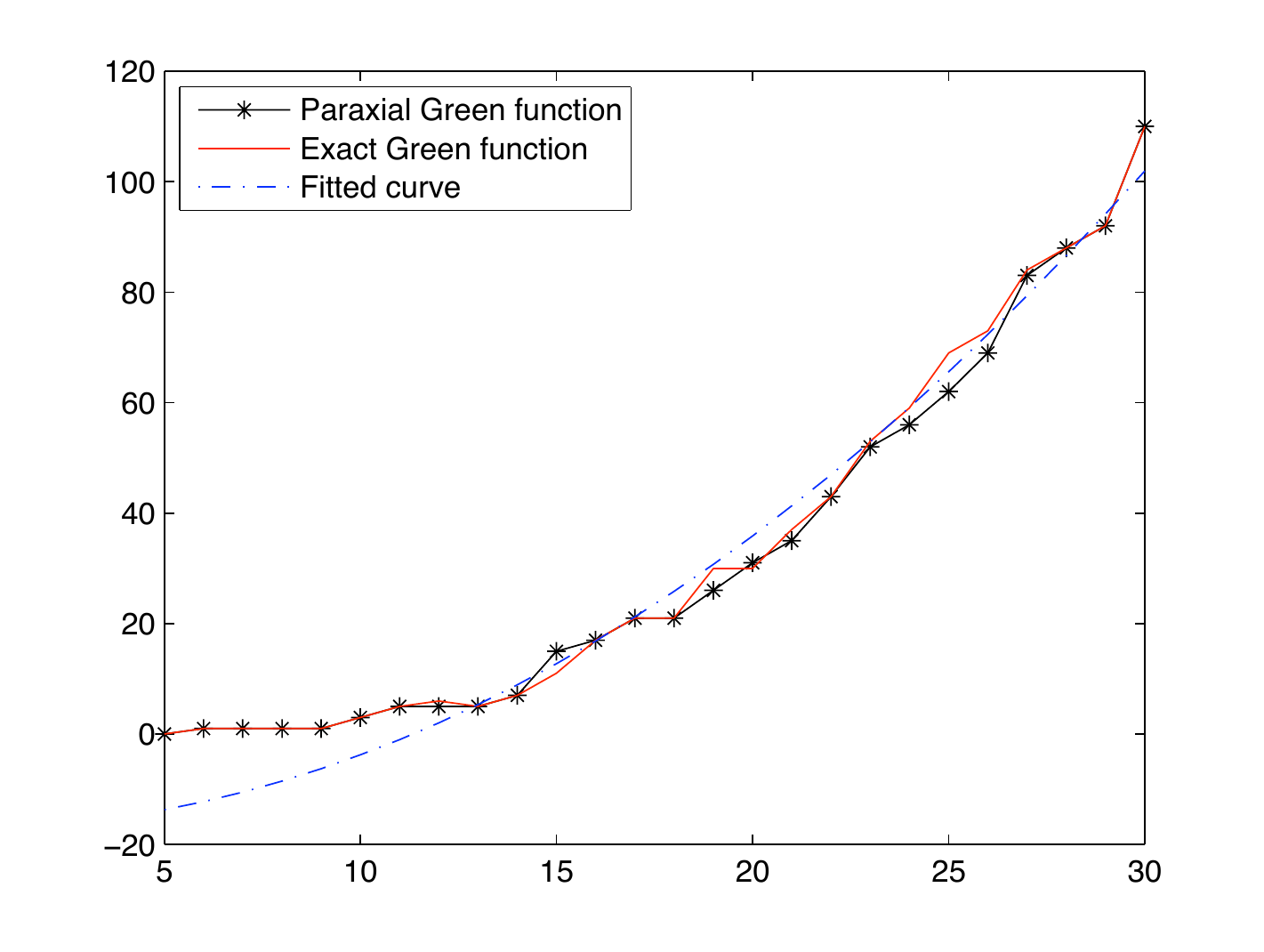}
\includegraphics[width=0.45\textwidth]{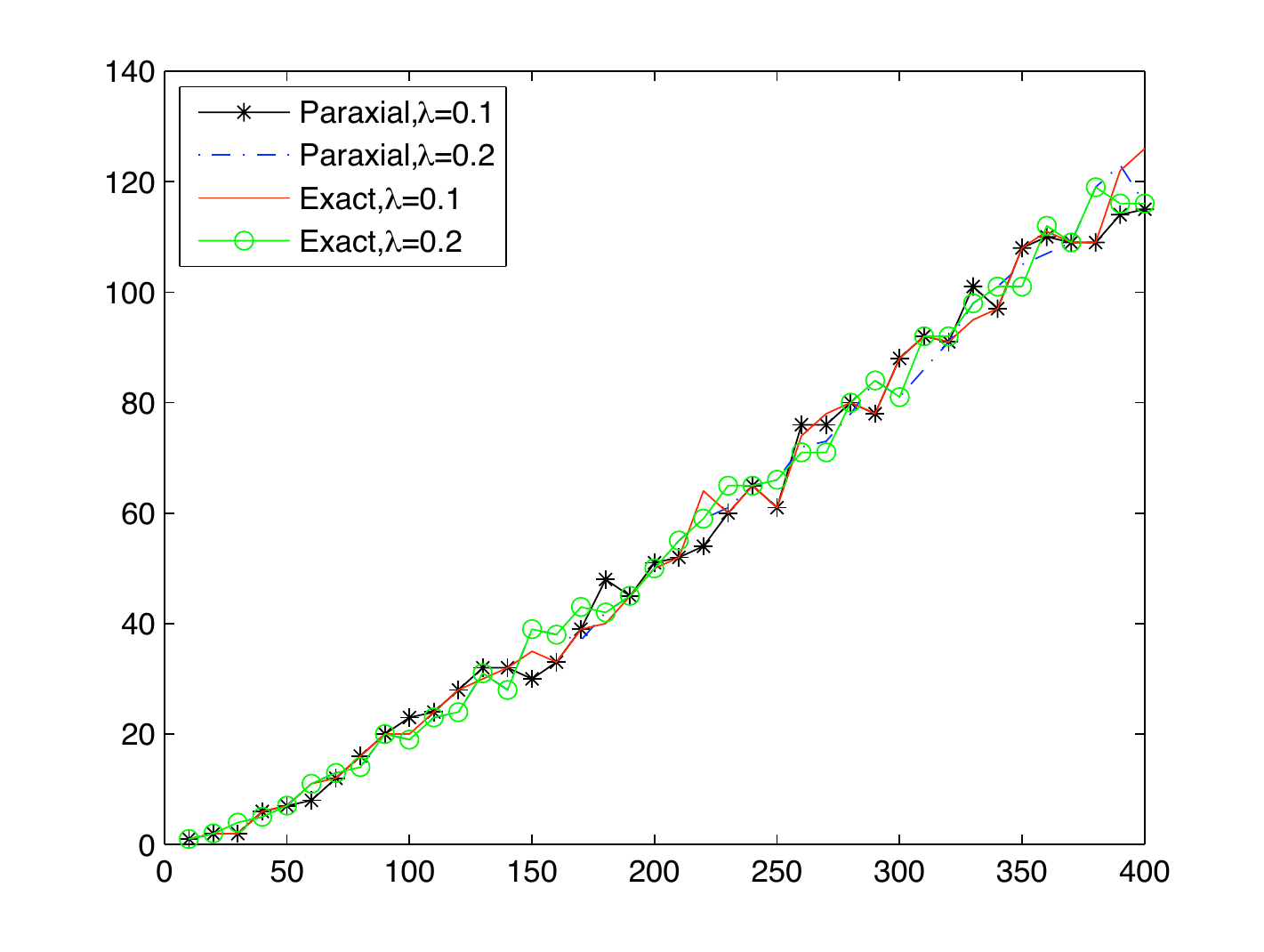}
\end{center}
 \caption{The empirical maximum number of recoverable 
 scatterers  (left for RM, right for SA) with $|\sigma|=0.001$ v.s. the number $n$
 of antennas (or antenna locations)  for $A=100$. 
 The data for $n\in [10,30]$ in the RM plot is fitted with
  the parabola (blue-dashed curve): $-16.4950+0.1366*x^2$. 
  The wavelength is $0.1$ for the RM case. 
  The SA plot depicts   the number of recoverable
  scatterers in four settings: paraxial sensing matrix with 
  $\lambda=0.1$ (black-asterisk), paraxial sensing matrix with $\lambda=0.2$ (blue-dashed), exact sensing matrix with $\lambda=0.1$ (red-solid) and exact sensing matrix with $\lambda=0.2$ (green-circled) }
 \label{fig-scatt}
 \end{figure}
 \commentout{
\begin{figure}[t]
\begin{center}
\includegraphics[width=0.45\textwidth]{1f-figures/paraxial/RM_detectable_curve.pdf}
\includegraphics[width=0.45\textwidth]{1f-figures/paraxial/SA_detectable.pdf}
\end{center}
 \caption{The blue curves are the empirical maximum  number of recoverable 
 scatterers  (left for RM, right for SA) with $|\sigma|=0.001$ v.s. the number $n$
 of antennas (or antenna locations)  for $A=100, \lambda=0.1$. 
 The data for $n\in [10,30]$ in the RM plot (left) is fitted with
  the parabola (red): $-14.9646+0.1486 n^2$. The red curve in
  the right plot is the number of recoverable
  scatterers for $\lambda=0.2$ with all other parameters the same as for the blue curve.  }
 \label{fig-scatt}
 \end{figure}
 }

 \commentout{
\section{Numerical simulations }
In the simulations, we set $z_0=1000$ and for the most
part $\lambda=1$. We use the 
true Green
function used in the simulation
\[
G(\br,\ba)={e^{i\om|\br-\ba|}\over 4\pi |\br-\ba|}
\]
instead of its paraxial approximation. In other words, there
is a mismatch between the sensing matrix and the
true propagator. 

For $\lambda=1$  we use the same computational domain. 
The condition (\ref{aperture}) in  Theorem~\ref{thm3} corresponds to  $A=100k, k\in \IN$. 

 In the left plot of Figure \ref{fig1}, the coherence  is calculated
 with aperture $A\in [10, 200]$ and $n=100$. The curve
 flattens near and beyond  $A=100$ in agreement with
 the theory. 
 The optimal incoherence
 is achieved at around $A=100$, consistent with the paraxial  theory. The right plot displays the numerically found maximum number
 of recoverable source points as a function of $n$ with $A=100$.
 The linear profile is also consistent with the prediction
 of the paraxial theory. 
  \begin{figure}[t]
\begin{center}
\includegraphics[width=0.44\textwidth]{1f-figures/paraxial/coherence.pdf}
  \includegraphics[width=0.44\textwidth]{1f-figures/paraxial/Recoverable_sources.pdf}
\end{center}
\caption{(Left) Coherence  as a function
of aperture  with 100 antennas; (Right) the empirical maximum number of recoverable sources  with
 $|\sigma|=1$ v.s. the number $n$
 of antennas for $A=100, \lambda=1$.}
 \label{fig1}
 \end{figure}
 
 \begin{figure}[t]
\begin{center}
\includegraphics[width=0.44\textwidth]{1f-figures/1_1.pdf}
  \includegraphics[width=0.44\textwidth]{1f-figures/A100-pass.pdf}
\end{center}
\caption{(Left) Coherence  as a function
of aperture  with 100 antennas; (Right) the empirical maximum number of recoverable sources  with
 $|\sigma|=1$ v.s. the number $n$
 of antennas for $A=100, \lambda=1$.}
 \label{fig1-2}
 \end{figure}

 \commentout{
 \begin{figure}[t]
\begin{center}
\includegraphics[width=0.44\textwidth]{1f-figures/1_1.pdf}
\includegraphics[width=0.47\textwidth]{1f-figures/100to400.pdf}
\end{center}
 \caption{Coherence  as a function
 of aperture  with 100 antennas (left) and
the number $n$ of antennas with $A=100, 200, 300, 400$ (right).}
 \label{fig1}
 \label{fig100to400}
 \end{figure}
 }
 
\commentout{
In Figure \ref{fig4} the coherence  is calculated
for $A=\in [10, 200]$  with the antenna number $n=(1+A/10)^2$.
 \begin{figure}[t]
\begin{center}
\includegraphics[width=0.5\textwidth]{1f-figures/4.pdf}
\end{center}
 \caption{Coherence  as a function of
 $A$ with $n=(1+A/10)^2$.}
 \label{fig4}
 \end{figure}
 }

\commentout{
In Figure \ref{fig5} the number of recoverable source points
is plotted as a function of aperture with $n=(1+A/20)^2$. 

 \begin{figure}[t]
\begin{center}
\includegraphics[width=0.5\textwidth]{1f-figures/5.pdf}
\end{center}
 \caption{The number of recoverable source points as a function of
 $A$ with $n=(1+A/20)^2$.}
 \label{fig5}
 \end{figure}
 }

\commentout{
 \begin{figure}[t]
\begin{center}
\includegraphics[width=0.45\textwidth]{1f-figures/12.pdf}
\includegraphics[width=0.45\textwidth]{1f-figures/14.pdf}
\end{center}
 \caption{Compressed sensing image (left) of $138$ recovered  source points   with $A=100, n=400$. Matched field processing  image (right)
   with $A=400, n=400$  is
   obtained with the thresholding  at the intensity level 0.027.  
   The red circles represent the true locations of the targets.}
 \label{fig9}
 \label{fig10}
 \end{figure}
 }
\commentout{
 \begin{figure}[t]
\begin{center}
\includegraphics[width=0.45\textwidth]{1f-figures/10.pdf}
\includegraphics[width=0.45\textwidth]{1f-figures/11.pdf}
\end{center}
 \caption{Matched field processing  image (left)
   with $A=400$ with $n=441$. The image on the right is
   obtained when the thresholding  at the intensity level 0.033
   is applied to the image on the left. }
 \label{fig9}
 \end{figure}
 }

\subsection{Scattering simulation}
  Let the  randomly distributed point scatterers of strengths $\sigma_j$ be located
at $\bx_j, j=1,2,3,...s$. Then the resulting Green function  $\tilde G$ obeys  the Lippmann-Schwinger equation
\beqn
\tilde G(\bx,\ba_i)=G(\bx,\ba_i)
+\sum_{j=1}^s \sigma_j
 G(\bx,\bx_j)\tilde  G(\bx_j,\ba_i),\quad i=1,...,n.
\eeqn 
$\tilde G$ contains all the multiple scattering events
and can be calculated numerically (see e.g. \cite{mfirm-ip}
for details).  

 In the left plot of Figure \ref{fig-scatt2} the compressed sensing image with
 RM set-up is shown
 for $A=100, \lambda=0.1$ and $n=20$. 
 For comparison, the image obtained by
 the linear processor of 
 the traditional matched field processing theory is  
 shown on the right. In Appendix \ref{sec:A},
 we outline the rudiments  of matched field processing. 
 \begin{figure}[t]
\begin{center}
\includegraphics[width=0.47\textwidth]{1f-figures/paraxial/cs_imaging.pdf}
\includegraphics[width=0.47\textwidth]{1f-figures/paraxial/RM_imaging.pdf}
\end{center}
 \caption{(Left) 43 scatterers are perfectly recovered by
 compressed sensing technique
 with $20$ antennas.    The red circles represent the true locations of the targets. The plot on the right is 
 produced by the conventional matched field processing. }
 \label{fig-scatt2}
 \end{figure}

 \begin{figure}[t]
\begin{center}
\includegraphics[width=0.47\textwidth]{1f-figures/active20_wavelength1_cs.pdf}
\includegraphics[width=0.47\textwidth]{1f-figures/active20_wavelength1_MFP.pdf}
\end{center}
 \caption{(Left) 65 scatterers are perfectly recovered by
 compressed sensing technique
 with $20$ antennas.    The red circles represent the true locations of the targets. The plot on the right is 
 produced by the conventional matched field processing. }
 \label{fig-scatt2-2}
 \end{figure}

In Figure \ref{fig-scatt} 
the numerically found maximum number of recoverable scatterers  as
a function of the number of antennas for $A=100$
is shown for both RM and SA set-ups.  Clearly,
both curves are consistent with the prediction of the theory.

\commentout{
\begin{figure}[t]
\begin{center}
\includegraphics[width=0.6\textwidth]{1f-figures/A100-pass.pdf}
\end{center}
 \caption{The number of recoverable sources  with
 $|\sigma|=1$ v.s. the number $n$
 of antennas for $A=100, \lambda=1$. }
 \label{fig-passive}
 \end{figure}
}

\begin{figure}[t]
\begin{center}
\includegraphics[width=0.45\textwidth]{1f-figures/paraxial/RM_detectable_curve.pdf}
\includegraphics[width=0.45\textwidth]{1f-figures/paraxial/SA_detectable.pdf}
\end{center}
 \caption{The empirical maximum number of recoverable 
 scatterers  (left for RM, right for SA) with $|\sigma|=0.001$ v.s. the number $n$
 of antennas (or antenna locations)  for $A=100, \lambda=1$. 
 The data for $n\in [10,30]$ in the RM plot is fitted with
  the parabola (red): $-14.9646+0.1486 n^2$. }
 \label{fig-scatt}
 \end{figure}

\begin{figure}[t]
\begin{center}
\includegraphics[width=0.45\textwidth]{1f-figures/RM-n.pdf}
\includegraphics[width=0.45\textwidth]{1f-figures/SA.pdf}
\end{center}
 \caption{The empirical maximum number of recoverable 
 scatterers  (left for RM, right for SA) with $|\sigma|=0.001$ v.s. the number $n$
 of antennas (or antenna locations)  for $A=100, \lambda=1$. 
 The data for $n\in [10,30]$ in the RM plot is fitted with
  the parabola (red): $-12.6711+0.1985n^2$. }
 \label{fig-scatt-2}
 \end{figure}

\commentout{
 The sonar problem is often concerned with estimating the
 location of a point acoustic source in an underwater  environment. 
 The advantages of the linear processor are
 its simplicity and robustness to errors (mismatch)
 in the modeled fields; it has been shown
 that the linear processor is the least sensitive of all
 the processors to errors in the modeled fields.
 A disadvantage of the linear processor is the presence
 of side lobes \cite{BKM, Tol}.
 }

\commentout{

In this following numerical example, we consider
the free space whose Green function is well known. 
The computation domain is $[-5000, 5000]\times [0,5000]$.
 51 equally spaced antennas are placed at $x=-5000, y\in [0, 5000]$. The coherence length $\ell_c$ is estimated according to
 Rayleigh's resolution formula and used as the grid size
 for setting up the sensing matrix. 1000 
 point sources are randomly placed in the right half domain
 $[0,5000]\times [0, 5000]$. 20 different  frequencies 
 corresponding to equally spaced wavelength 
 from 52 to 90 are used. 
 
 \begin{figure}[t]
\begin{center}
\includegraphics[width=0.7\textwidth]{Passive_filter_100sources_x.pdf}
\end{center}
 \caption{MFP with the linear processor and 100 randomly distributed point sources of unit strength whose locations are marked
 by crosses. The appearance of
 side lobes increases ambiguities. }\label{MFP100}
 \end{figure}
  
 \begin{figure}[t]
\begin{center}
\includegraphics[width=0.7\textwidth]{Passive_OMP_100sources.pdf}
\end{center}
 \caption{A near perfect reconstruction with compressed sensing (which algorithm?).}\label{MP100}
 \end{figure}
 }

 \commentout{
  \begin{figure}[t]
\begin{center}
\includegraphics[width=0.7\textwidth]{Passive_OMP_clutterint_10_negdomain.pdf}
\end{center}
 \caption{A near perfect reconstruction with compressed sensing
 in the presence of 100 randomly distributed clutter particles of scattering strength
 10 in the left half domain $x\in [-5000, 0]$.}\label{MP100clutter}
 \end{figure}
 }
 
 \commentout{
 \begin{figure}[t]
\begin{center}
\includegraphics[width=0.7\textwidth]{Passive_filter_200sources_x.pdf}
\end{center}
 \caption{MFP with the linear processor and 200 randomly distributed point sources whose locations are marked
 by crosses. The appearance of
 side lobes increases ambiguities. }\label{MFP200}
 \end{figure}
 
  \begin{figure}[t]
\begin{center}
\includegraphics[width=0.7\textwidth]{Passive_OMP_200sources.pdf}
\end{center}
 \caption{CS with CoSaMP fails.}\label{MP200}
 \end{figure}
}
}

\section{Conclusions}
In this paper, we have studied the imaging problem
from the perspective of  compressed sensing, in particular
the idea that sufficient incoherence and sparsity guarantee
uniqueness of the solution. Moreover, by adopting 
the target ensemble following \cite{Tropp} and the sensor
ensemble, the 
 maximum number of recoverable targets is proved
to be at least  proportional to the number of measurement
data modulo a log-square factor with overwhelming probability. 

We have analyzed three imaging settings: the inverse source,
the inverse scattering with the response matrix and 
with the synthetic aperture. 
Important contributions of our analysis include  the discoveries
of the 
decoherence effect  induced  by random antenna locations and
the threshold aperture defined by
$\rho=1$ for source  and RM imaging
and  $\rho=1/2$ for SA imaging where
$\rho=\lambda z_0/(A\ell)$.  

In this paper we have considered the localization of point targets
and the determination of their amplitudes. A natural next step is to consider extended targets. However our approach does not extend in
a straightforward manner to imaging of extended targets, as can be easily seen. Assume that we model an extended target approximately as an ensemble of point targets that are spaced very close together. Clearly,
this requires the mesh size $\ell$ to be so small as to render  $\rho\gg 1$. To apply  our theorems  would then require that the aperture and the number of
antennas  increase without bound.
Clearly this is not a feasible way to image extended targets via compressed sensing. Therefore a somewhat different approach, on which we plan to report in our future work, is required for extended targets.

\appendix

\section{Restricted isometry property (RIP) }\label{app:rip}

A fundamental notion in compressed sensing under which
BP yields the unique exact solution is the restrictive isometry property due to Cand\`es and Tao \cite{CT}.
Precisely, let  the sparsity $s$  of the target vector be the
number of nonzero components of $X$ and define the restricted isometry constant $\delta_s$
to be the smallest positive number such that the inequality
\[
(1-\delta_s) \|Z\|_2^2\leq \|\bA Z\|_2^2\leq (1+\delta_s)
\|Z\|_2^2
\]
holds for all $Z\in \IC^m$ of sparsity at most $ s$.

Now we state the fundamental   result 
of the RIP approach. 
\begin{theorem} \cite{CT}\label{thm1}
Suppose the true target vector $X$ has
the sparsity at most $s$. 
Suppose the  restricted isometry constant of $\bA$
satisfies the inequality 
\beq
\label{ric}
\delta_{3s}+3\delta_{4s}< 2. 
\eeq
 Then $X$ is the unique solution of BP.
\end{theorem}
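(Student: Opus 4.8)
The plan is to run the classical Cand\`es--Tao argument: a null-space/cone-condition reduction, followed by a greedy block decomposition of a putative error vector, the RIP, and a near-orthogonality estimate, ending in a single scalar inequality that the hypothesis \eqref{ric} contradicts. Suppose $X^{\sharp}$ solves BP; set $h:=X^{\sharp}-X$ and let $T_{0}$ be the support of $X$, so $|T_{0}|\le s$. First I would record two facts. Feasibility of both $X$ and $X^{\sharp}$ gives $\bA h=0$. Optimality, $\|X^{\sharp}\|_{1}\le\|X\|_{1}$, split by the triangle inequality over $T_{0}$ and $T_{0}^{c}$, gives the \emph{cone condition} $\|h_{T_{0}^{c}}\|_{1}\le\|h_{T_{0}}\|_{1}$. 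It therefore suffices to show that the only $h\in\ker\bA$ obeying this inequality for some index set of size $\le s$ is $h=0$.

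Second, I would perform the greedy block decomposition of $h$ on $T_{0}^{c}$: sort the entries of $h$ on $T_{0}^{c}$ by decreasing magnitude and cut $T_{0}^{c}$ into consecutive blocks $T_{1},T_{2},\dots$ of a fixed size (the block size being the tuning parameter that ultimately produces the indices $3s$ and $4s$), with $T_{1}$ carrying the largest entries. The elementary fact that each coordinate of $h$ on $T_{j}$ is at most the average magnitude of $h$ on $T_{j-1}$ gives $\|h_{T_{j}}\|_{2}\le(\text{block size})^{-1/2}\|h_{T_{j-1}}\|_{1}$ for $j\ge2$; summing and using the cone condition yields $\sum_{j\ge2}\|h_{T_{j}}\|_{2}\le c\,\|h_{T_{0}}\|_{2}\le c\,\|h_{T_{0}\cup T_{1}}\|_{2}$ for an explicit constant $c<1$ determined by the block size.

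Third, the core step. Since $\bA h=0$ we have $\bA h_{T_{0}\cup T_{1}}=-\sum_{j\ge2}\bA h_{T_{j}}$. I would bound the left side below by the RIP, $\|\bA h_{T_{0}\cup T_{1}}\|_{2}^{2}\ge(1-\delta_{4s})\|h_{T_{0}\cup T_{1}}\|_{2}^{2}$, the index $4s$ being an upper bound on $|T_{0}\cup T_{1}|$; and I would bound the right side by expanding $\langle\bA h_{T_{0}\cup T_{1}},\bA h_{T_{j}}\rangle$ over the disjoint pieces $T_{0},T_{1}$ and invoking the near-orthogonality estimate $|\langle\bA u,\bA v\rangle|\le\delta_{|\mathrm{supp}\,u\,\cup\,\mathrm{supp}\,v|}\,\|u\|_{2}\|v\|_{2}$ for $u,v$ with disjoint supports, which itself follows by polarization from the RIP inequalities applied to $u\pm v$. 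Feeding in the sum bound of the previous paragraph collapses everything into a scalar inequality of the schematic form $(1-\delta_{4s})\,\|h_{T_{0}\cup T_{1}}\|_{2}\le(\alpha\,\delta_{3s}+\beta\,\delta_{4s})\,\|h_{T_{0}\cup T_{1}}\|_{2}$ with numerical $\alpha,\beta$; the hypothesis $\delta_{3s}+3\delta_{4s}<2$ is exactly what makes the right-hand coefficient strictly smaller than $1-\delta_{4s}$, forcing $h_{T_{0}\cup T_{1}}=0$. The cone condition then gives $h_{T_{0}^{c}}=0$ as well, hence $h=0$, as claimed.

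The main obstacle is not any one of the three steps but the arithmetic that welds them together: the block size, the placement of the cross terms $\langle\bA h_{T_{0}},\bA h_{T_{j}}\rangle$ versus $\langle\bA h_{T_{1}},\bA h_{T_{j}}\rangle$ (whether to group $T_{0}\cup T_{1}$ against $T_{j}$ or keep them apart), and the use of $\|h_{T_{0}}\|_{2}^{2}+\|h_{T_{1}}\|_{2}^{2}=\|h_{T_{0}\cup T_{1}}\|_{2}^{2}$, must be balanced so that the final scalar inequality is governed precisely by $\delta_{3s}+3\delta_{4s}<2$ rather than by a slightly stronger or weaker combination of restricted isometry constants. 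Once that bookkeeping is pinned down, the cone condition, the shrinkage estimate, and the polarization identity are routine.
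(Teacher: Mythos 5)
The paper itself offers no proof of this statement: it is quoted from the literature (the condition $\delta_{3s}+3\delta_{4s}<2$ is in fact Theorem~1 of the cited Cand\`es--Romberg--Tao paper \cite{CRT2} specialized to noiseless data, rather than of \cite{CT}), so there is no in-paper argument to compare yours against. Your sketch follows the standard proof strategy for this family of results, and your first two steps are exactly right: feasibility plus optimality give the cone condition $\|h_{T_0^c}\|_1\le\|h_{T_0}\|_1$, and the sorted block decomposition with block size $M$ gives $\sum_{j\ge2}\|h_{T_j}\|_2\le M^{-1/2}\|h_{T_0}\|_1\le\sqrt{s/M}\,\|h_{T_0}\|_2$.

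The place where your plan, as written, would not land on the advertised constant is the third step. The polarization estimate $|\langle\bA u,\bA v\rangle|\le\delta_{|\mathrm{supp}\,u\,\cup\,\mathrm{supp}\,v|}\|u\|_2\|v\|_2$ applied to $T_0\cup T_1$ against $T_j$ produces $\delta_{s+2M}$ (or, keeping $T_0$ and $T_1$ apart, $\delta_{s+M}$ and $\delta_{2M}$), and no choice of $M$ converts the resulting scalar inequality into $\delta_{3s}+3\delta_{4s}<2$; that route is the one that yields the conditions $\delta_{2s}+\delta_{3s}<1$ of the original decoding paper or $\delta_{2s}<\sqrt{2}-1$ of Cand\`es's later note. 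The published proof of the exact statement above takes $M=3s$ and dispenses with cross terms altogether: since $\bA h=0$,
\[
0=\|\bA h\|_2\ \ge\ \|\bA h_{T_0\cup T_1}\|_2-\sum_{j\ge2}\|\bA h_{T_j}\|_2\ \ge\ \Bigl(\sqrt{1-\delta_{4s}}-\sqrt{\tfrac{1}{3}\bigl(1+\delta_{3s}\bigr)}\Bigr)\|h_{T_0\cup T_1}\|_2,
\]
using $|T_0\cup T_1|\le 4s$ for the lower RIP bound, $|T_j|\le 3s$ for the upper RIP bound $\|\bA h_{T_j}\|_2\le\sqrt{1+\delta_{3s}}\,\|h_{T_j}\|_2$, and your shrinkage estimate with $\sqrt{s/M}=1/\sqrt{3}$. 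The prefactor is positive precisely when $1-\delta_{4s}>\tfrac{1}{3}(1+\delta_{3s})$, i.e.\ when $\delta_{3s}+3\delta_{4s}<2$, forcing $h_{T_0\cup T_1}=0$ and hence, via the cone condition, $h=0$. So the strategy is sound, but the ``bookkeeping'' you flagged as the main obstacle is resolved by replacing the polarization step with the direct triangle-inequality/upper-RIP bound, not by tuning the cross-term grouping.
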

\begin{remark}
\label{rmk:sp}
Greedy algorithms have significantly lower  computational
complexity than linear programming and have 
provable performance under various conditions.
For example
under the condition  $\delta_{3s}<0.06$
 the Subspace Pursuit (SP) algorithm is guaranteed to exactly recover $X$ via a finite number of iterations \cite{DM}.
 
\end{remark}

In this appendix we show that the sensing matrix for source inversion satisfies RIP.
 This can be readily seen
by rewriting the paraxial Green function (\ref{8-3})
\beq
\label{8-3-1}
G(\br,\ba)={e^{i\om z_0}\over 4\pi z_0} 
e^{i\om (x^2+y^2)/(2z_0)} e^{-i\om x\xi /z_0}
e^{-i\om y\eta/z_0}
e^{i\om (\xi^2+\eta^2)/(2z_0)},
\eeq
for $\br=(x,y, z_0), \ba=( \xi,\eta, 0). $

Now the sensing matrix (\ref{1.10}) can be written
as the product of three matrices
\[
\bA=\bD_1 \bPhi\bD_2
\]
where 
\[
\bD_1=\hbox{diag} (e^{i\om (\xi_j^2+\eta_j^2)/(2z_0)}),\quad \bD_2=\hbox{diag} (e^{i\om (x_l^2+y^2_l)/(2z_0)})
\]
 are  unitary and  
\[
\bPhi=n^{-1/2} \lt[e^{-i\om \xi_j x_l /z_0}e^{-i\om \eta_j y_l /z_0}\rt]. 
\]

Assume without loss of generality that $x_l=y_l=l\ell, l=0,...,m-1$
and suppose that  $(\xi_j,\eta_j), j=1,...,n$ are independently and  uniformly 
distributed in  $[0, A]\times [0,A]$ with 
\beq
\label{102}
{A\ell\over \lambda z_0}=1,
\eeq
cf. (\ref{aperture}). 

The result  essential for our problem is due to Rauhut \cite{Rau}. 
 
\begin{theorem} \cite{Rau}
If 
\[
{n\over \ln{n}}\geq C s\ln^2{s} \ln{m} \ln{1\over \ep}
\]
for $\ep\in (0,1)$ and some absolute constant $C$,  then the restricted isometry condition (\ref{ric}) is satisfied 
with probability at least $1-\ep$.
\end{theorem}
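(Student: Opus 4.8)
The plan is to reduce the claim to Rauhut's restricted-isometry estimate for randomly sampled bounded orthonormal systems, using the factorization $\bA=\bD_1\bPhi\bD_2$ recorded above. First I would note that $\bD_1$ is unitary while $\bD_2$ is unitary and diagonal, so that for every $Z\in\IC^m$ one has $\|\bA Z\|_2=\|\bPhi\bD_2 Z\|_2$, $\|\bD_2 Z\|_2=\|Z\|_2$ and $\|\bD_2 Z\|_0=\|Z\|_0$. Consequently the restricted isometry constants of $\bA$ and $\bPhi$ coincide at every order, and it suffices to prove the RIP for $\bPhi$.

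Second, I would make the aperture condition explicit so as to recognize $\bPhi$ as a nonequispaced Fourier sampling matrix. With $\lambda=2\pi/\om$ and $A\ell/(\lambda z_0)=1$ one gets $\om\ell/z_0=2\pi/A$; rescaling $\theta_j:=\xi_j/A$, $\psi_j:=\eta_j/A$ (i.i.d.\ uniform on $[0,1]$) and writing the target lattice points as $(a\ell,b\ell)$ with $(a,b)$ ranging over an index set $\Lambda\subset\IZ^2$ of cardinality $m$, the entries of $\bPhi$ become $n^{-1/2}e^{-2\pi i(a\theta_j+b\psi_j)}$. Hence $\bPhi$ is exactly the matrix sampling the trigonometric system $\{(t_1,t_2)\mapsto e^{2\pi i(at_1+bt_2)}:(a,b)\in\Lambda\}$, an orthonormal system on the torus $\IT^2$ with uniform bound $1$, at the $n$ i.i.d.\ uniform sampling points $(\theta_j,\psi_j)$.

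Third, I would invoke Rauhut's theorem (\cite{Rau}): for every fixed $c\in(0,1)$ there is a constant $C_c$ such that the bound $n/\ln n\ge C_c\,s\ln^2 s\,\ln m\,\ln(1/\ep)$ implies $\delta_s(\bPhi)\le c$ with probability at least $1-\ep$. I would apply this with $c=1/2$ and with the sparsity level $4s$ in place of $s$ — replacing $s$ by $4s$ only alters the absolute constant — so that, for a suitable $C$, the hypothesis displayed in the theorem gives $\delta_{4s}(\bA)=\delta_{4s}(\bPhi)<1/2$. Since $\delta_k$ is nondecreasing in $k$, $\delta_{3s}(\bA)\le\delta_{4s}(\bA)$, hence $\delta_{3s}(\bA)+3\delta_{4s}(\bA)\le 4\delta_{4s}(\bA)<2$, which is (\ref{ric}); Theorem \ref{thm1} then yields exact recovery by BP, the assertion of the theorem.

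The reduction above is merely bookkeeping with the aperture normalization; the substance is entirely contained in Rauhut's theorem, and that is where the difficulty lies. Its proof rests on controlling $\IE\sup_{|T|\le s}\|\bPhi_T^{*}\bPhi_T-\bI\|$ (with $\bPhi_T$ the column submatrix indexed by $T$) by a Dudley-type chaining estimate — equivalently, symmetrization together with Rudelson-type moment bounds for the associated random operator — and then upgrading this expectation bound to a high-probability statement via a concentration inequality; it is precisely the chaining step that produces the $\ln^2 s\,\ln m$ factor and the residual $\ln n$ on the left-hand side. The only point needing a little care is that various write-ups of Rauhut's estimate distribute the logarithmic factors slightly differently (some state a single $\ln^3$), so one should cite the version matching the displayed form, all of them being equivalent up to the value of the absolute constant.
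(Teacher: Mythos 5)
Your proposal is correct and matches the paper's treatment: the paper presents this statement as a direct citation of Rauhut's theorem on RIP for randomly sampled (nonequispaced) trigonometric systems and offers no proof of its own, while the two reductions you supply --- that the diagonal unitaries $\bD_1,\bD_2$ make the restricted isometry constants of $\bA$ and $\bPhi$ coincide, and that a bound $\delta_{4s}<1/2$ yields $\delta_{3s}+3\delta_{4s}<2$ --- are precisely the remarks the paper records immediately around the theorem. The substantive core, which you correctly locate in Rauhut's chaining/Rudelson-type argument, is likewise not reproduced in the paper, so your write-up is at least as complete as the paper's own.
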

See \cite{CRT1, RV} for similar results for sensors located
in a particular  discrete subset of $[0,A]\times [0,A]$.

Since $\bD_1$ and $\bD_2$ are diagonal and unitary,
$\bA$ satisfies (\ref{ric}) if and only if $\bPhi$ satisfies
the same condition. 

\begin{theorem}
\label{thm:is2}
Let the sensor array be randomly drawn from the sensor
ensemble satisfying (\ref{102}). If 
\[
{n\over \ln{n}}\geq C s\ln^2{s} \ln{m} \ln{1\over \ep}
\]
for $\ep\in (0,1)$ and some absolute constant $C$,  then 
with probability at least $1-\ep$ all source amplitudes of sparsity less than $s$
can be uniquely determined from BP.
\end{theorem}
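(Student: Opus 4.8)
The plan is to reduce the statement to the restricted-isometry estimate of Rauhut quoted above, by exploiting the factorization $\bA=\bD_1\bPhi\bD_2$ recorded in this appendix, and then to invoke Theorem~\ref{thm1}. First I would use the representation (\ref{8-3-1}) of the paraxial Green function to write $\bA=\bD_1\bPhi\bD_2$ with $\bD_1,\bD_2$ diagonal and unitary and $\bPhi=n^{-1/2}\lt[e^{-i\om\xi_j x_l/z_0}\,e^{-i\om\eta_j y_l/z_0}\rt]$. Because $\bD_1$ is unitary and $\bD_2$ is diagonal unitary, for every $Z\in\IC^m$ one has $\|\bA Z\|_2=\|\bPhi(\bD_2 Z)\|_2$ while $\bD_2 Z$ has the same support and the same $\ell^2$ norm as $Z$; hence $\bA$ and $\bPhi$ share the same restricted isometry constant of every order, and it suffices to verify (\ref{ric}) for $\bPhi$.

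The heart of the argument is to recognize $\bPhi$ as the random sampling matrix of a bounded orthonormal system with constant $1$. I would index the $m$ lattice points by $\br_{(l,l')}=(l\ell,l'\ell,z_0)$, $l,l'\in\{0,\dots,\sqrt m-1\}$, and set $t_j=\xi_j/A$, $u_j=\eta_j/A$, which are i.i.d.\ uniform on $[0,1]$. Under the aperture normalization (\ref{102}) one has $\om x_l/z_0=2\pi l\ell/(\lambda z_0)=2\pi l/A$, and similarly for $y_{l'}$, so the entry of $\sqrt n\,\bPhi$ in row $j$ and in the column labeled $(l,l')$ equals $e^{-2\pi i(l t_j+l' u_j)}$. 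Thus the columns of $\sqrt n\,\bPhi$ are precisely the values at the random nodes $(t_j,u_j)$ of the tensor-product trigonometric functions $\{(t,u)\mapsto e^{-2\pi i(l t+l' u)}:0\le l,l'<\sqrt m\}$, which are orthonormal on $L^2([0,1]^2)$ and uniformly bounded by $1$. This places $\bPhi$ exactly in the setting of the quoted theorem of Rauhut; invoking it (with bounded-orthonormal-system constant $1$) shows that $\bPhi$, and therefore $\bA$, satisfies (\ref{ric}) with probability at least $1-\ep$ as soon as $n/\ln n\ge Cs\ln^2 s\ln m\ln(1/\ep)$.

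It then remains only to conclude: on the event that (\ref{ric}) holds, Theorem~\ref{thm1} guarantees that every $X\in\IC^m$ of sparsity at most $s$ is the unique minimizer of BP with data $Y=\bA X$, which contains the asserted recovery of all source vectors of sparsity less than $s$. I expect the main obstacle to be this middle step — making rigorous that randomizing the antenna locations turns $\bPhi$ into a genuine random sample of an orthonormal system. This is exactly where the integrality hypothesis (\ref{102}) is essential: it forces the phases $\om x_l/z_0$ into $(2\pi/A)\IZ$, so that the $\sqrt m$ functions $\xi\mapsto e^{-i\om x_l\xi/z_0}$ are orthonormal on $[0,A]$ under the uniform measure; absent this, the relevant system is not orthonormal and Rauhut's estimate cannot be applied directly. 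Everything else is routine bookkeeping about conjugation by diagonal unitaries.
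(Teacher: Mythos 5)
Your proposal is correct and follows essentially the same route as the paper: the factorization $\bA=\bD_1\bPhi\bD_2$ with diagonal unitary factors, the observation that RIP is preserved under this conjugation, the identification of $\bPhi$ (under the normalization (\ref{102})) as a random sampling matrix for the trigonometric system so that Rauhut's theorem applies, and the conclusion via Theorem~\ref{thm1}. The only difference is that you spell out explicitly the rescaling $t_j=\xi_j/A$, $u_j=\eta_j/A$ and the resulting orthonormality on $[0,1]^2$, a step the paper leaves implicit in its citation of Rauhut.
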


From the relationships (\ref{50}), (\ref{51}) it follows
immediately that $\bA^{\rm SA}$ also satisfies
(\ref{ric}) if
\beq
\label{101}
{A\ell\over \lambda z_0}={1\over 2},
\eeq
cf. (\ref{49-3}). 

\begin{theorem}
\label{thm:sa2}
Let the sensor array be randomly drawn from the sensor
ensemble satisfying (\ref{101}). If 
\[
{n\over \ln{n}}\geq C s\ln^2{s} \ln{m} \ln{1\over \ep}
\]
for $\ep\in (0,1)$ and some absolute constant $C$,  then 
with probability at least $1-\ep$ all scatter amplitudes of sparsity less than $s$
can be uniquely determined from BP.
\end{theorem}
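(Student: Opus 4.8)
The plan is to reduce Theorem \ref{thm:sa2} to the source-inversion result Theorem \ref{thm:is2} by exploiting the frequency-doubling identity (\ref{51}), which already drives the incoherence analysis of SA imaging in Section \ref{sec:sa}. First I would square the factored paraxial Green function (\ref{8-3-1}): squaring each of its three exponential pieces gives
\[
G^2(\ba,\br;\om)={1\over 4\pi z_0}\,e^{2i\om z_0}\,e^{i\om(x^2+y^2)/z_0}\,e^{-2i\om x\xi/z_0}\,e^{-2i\om y\eta/z_0}\,e^{i\om(\xi^2+\eta^2)/z_0},
\]
which is exactly $(4\pi z_0)^{-1}$ times the paraxial Green function evaluated at wavenumber $2\om$, equivalently at wavelength $\lambda/2$. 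Thus the only $z_0$-dependent discrepancy between $G^2(\cdot;\om)$ and $G(\cdot;2\om)$ is the global scalar $(4\pi z_0)^{-1}$, which is the precise meaning of (\ref{51}).

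Next I would mimic the factorization of the source matrix performed above: write $\bA^{\rm SA}=\bD_1\,\bPhi^{\rm SA}\,\bD_2$ with $\bD_1=\hbox{diag}(e^{i\om(\xi_j^2+\eta_j^2)/z_0})$ and $\bD_2=\hbox{diag}(e^{i\om(x_l^2+y_l^2)/z_0})$ diagonal and unitary, the scalars $e^{2i\om z_0}(4\pi z_0)^{-1}$ being absorbed into the unit-norm column normalization built into $\bPhi^{\rm SA}=n^{-1/2}[e^{-2i\om\xi_j x_l/z_0}e^{-2i\om\eta_j y_l/z_0}]$. This $\bPhi^{\rm SA}$ has exactly the structure of the matrix $\bPhi$ from the source analysis, but with $\om$ replaced by $2\om$, so the parameter governing it is $A\ell/\bigl((\lambda/2)z_0\bigr)=2A\ell/(\lambda z_0)$. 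The SA aperture condition (\ref{101}), namely $A\ell/(\lambda z_0)=1/2$, makes this parameter equal to $1$, which is precisely the normalization (\ref{102}) required by Theorem \ref{thm:is2}.

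With this identification, Rauhut's theorem applies to $\bPhi^{\rm SA}$ verbatim: under $n/\ln n\geq Cs\ln^2 s\,\ln m\,\ln(1/\ep)$ it satisfies the restricted isometry bound (\ref{ric}) with probability at least $1-\ep$. Since $\bD_1,\bD_2$ are diagonal and unitary, every submatrix of $\bA^{\rm SA}$ has the same singular values as the corresponding submatrix of $\bPhi^{\rm SA}$, so $\bA^{\rm SA}$ inherits (\ref{ric}), and Theorem \ref{thm1} then yields exact BP recovery of every $s$-sparse scattering amplitude vector. The step requiring care — which I would state rather than wave at — is the clean decoupling in the previous paragraph: the $z_0$-factor and the quadratic chirp phases must split into (i) a global scalar, harmless because RIP concerns the ratio $\|\bA^{\rm SA}Z\|_2/\|Z\|_2$ after column normalization, and (ii) the unitary diagonals $\bD_1,\bD_2$, which leave every $\delta_s$ unchanged; one should also note that passing from $\om$ to $2\om$ does not disturb the target-lattice hypothesis $x_l=y_l=l\ell$ underlying Rauhut's result, since that concerns only the grid and not the probe frequency. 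Beyond this modest bookkeeping the statement is an immediate corollary of Theorem \ref{thm:is2}, so there is no genuine obstacle.
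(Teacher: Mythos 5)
Your proposal is correct and follows essentially the same route as the paper, which likewise derives Theorem \ref{thm:sa2} from Theorem \ref{thm:is2} via the identity (\ref{51}) and the halved aperture condition (\ref{101}); you merely spell out the diagonal-unitary factorization and the rescaling $\om\mapsto 2\om$ that the paper leaves implicit. (The only blemish is a harmless typo in your displayed formula for $G^2$, whose prefactor should be $(4\pi z_0)^{-2}e^{2i\om z_0}$; your stated conclusion that $G^2(\cdot;\om)=(4\pi z_0)^{-1}G(\cdot;2\om)$ is the correct one.)
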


The superiority of the RIP approach, if it works, 
over that of the incoherence taken in the main text
of the paper, is that the uniqueness for BP is
guaranteed for all targets of sparsity at most $s$
and  the target ensemble needs not
be  introduced. Moreover, the stability of solution w.r.t. 
noise is guaranteed \cite{CRT2, HS09}. 
However, Theorem \ref{thm:rm} for the response matrix imaging does not seem amenable to the RIP approach.

\section{Proof of Theorem \ref{tropp}}
\label{sec:pf}
Theorem \ref{tropp} is an easy consequence from
the following two theorems due to Tropp \cite{Tropp}.

\begin{proposition}
\cite{Tropp}
\label{tropp1} Let $A$ be a $n\times m$ matrix with full rank. 
Let $\bA_s$ be a submatrix  generated by randomly
selecting $s$ columns of $\bA$. The condition
\beq
\label{sp}
6 \lt(p\mu^2s \ln{(1+s/2)}\rt)^{1/2} +{s\over m}
\|\bA\|_2^2\leq {\alpha\over 2 e^{1/4}},\quad p\geq 1
\eeq
implies that
\beq
\label{46}
\IP\lt(\|\bA_s^*\bA_s-\bI_s\|_2<\alpha\rt)
\geq 1- \lt({2\over s}\rt)^p.
\eeq
\end{proposition}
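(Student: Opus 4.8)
This statement is Tropp's bound on the conditioning of a random subdictionary \cite{Tropp}, and I would reconstruct its proof in four steps. First I would reduce to the \emph{Bernoulli model}, in which each of the $m$ columns of $\bA$ is retained independently with probability $\delta=s/m$ instead of selecting a set of exactly $s$ columns; a standard comparison argument shows that a tail bound for $\|\bA_s^*\bA_s-\bI_s\|_2$ in the Bernoulli model transfers to the fixed-size model with the same parameters. Throughout, let $T$ denote the random set of selected column indices and $P_T$ the corresponding diagonal coordinate projection, so that $\bA_s$ corresponds to $\bA P_T$.

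Second, I would reduce matters to a hollow Gram matrix. Since the columns of $\bA$ are unit-norm (see Section \ref{sec2}), we have $\bA_s^*\bA_s-\bI_s=P_T H P_T$ with $H:=\bA^*\bA-\bI_m$, where $H$ has vanishing diagonal and off-diagonal entries bounded by $\mu$; hence $\max_i\|He_i\|_2\le\mu\sqrt m$, while $\|H\|_2\le\|\bA\|_2^2$ up to a lower-order term. The third and analytically decisive step is to estimate the moment $\bigl(\IE\|P_T H P_T\|_2^q\bigr)^{1/q}$ for $q\ge1$: writing $P_T=\delta\bI+D$ with $D$ mean-zero diagonal, I would decouple the order-two chaos $P_TH P_T$ into $P_THP_{T'}$ with an independent copy $T'$, symmetrize each restriction, and expand $(\delta\bI+D)H(\delta\bI+D')$. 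The pure-mean piece contributes $\lesssim(s/m)\|\bA\|_2^2$; the fully fluctuating piece $DHD'$ is controlled by the operator-norm non-commutative Khintchine (Rudelson-type) moment inequality and contributes $\lesssim\sqrt q\,\max_i\|He_i\|_2\le\sqrt q\,\mu\sqrt s$; the two cross terms are of strictly lower order. Collecting constants yields a bound of the form $C\sqrt q\,\mu\sqrt s+(s/m)\|\bA\|_2^2$.

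Fourth, I would choose $q=p\ln(1+s/2)$ and apply Markov's inequality,
\[
\IP\bigl(\|\bA_s^*\bA_s-\bI_s\|_2\ge\alpha\bigr)\le\alpha^{-q}\,\IE\|P_T H P_T\|_2^q ,
\]
together with the Step-3 moment bound. This choice of $q$ makes the fluctuation term equal to a constant times $(p\mu^2 s\ln(1+s/2))^{1/2}$, and hypothesis (\ref{sp}) is calibrated precisely so that the right-hand side is at most $(2/s)^p$; the numerical factors $6$ and $2e^{1/4}$ absorb the constant in the non-commutative Khintchine inequality and the loss incurred in passing from moments to the tail bound.

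The hard part will be the third step: getting the moment inequality with small, explicit constants and cleanly separating the ``fluctuation'' contribution $\sqrt q\,\mu\sqrt s$ from the ``global'' contribution $(s/m)\|\bA\|_2^2$. This rests on the decoupling inequality for Bernoulli chaos together with the sharp operator-norm non-commutative Khintchine inequality (with $\sqrt q$ growth of its constant); the Bernoulli reduction and the Markov conversion are routine. Since the proposition is credited to \cite{Tropp}, in the final paper one would simply invoke that reference, but the sketch above indicates the mechanism behind it.
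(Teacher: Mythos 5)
The paper does not actually prove this proposition: it is imported verbatim from Tropp's \emph{On the conditioning of random subdictionaries} and used as a black box in Appendix B, so there is no in-paper argument to compare against. Your sketch is a faithful reconstruction of the mechanism of Tropp's own proof --- reduction to a random coordinate model, passage to the hollow Gram matrix $H=\bA^*\bA-\bI_m$ (legitimate here because the columns are unit-normalized, so $\mathrm{diag}(\bA^*\bA)=\bI_m$), decoupling of the second-order chaos $P_THP_T$, the noncommutative Khintchine/Rudelson moment inequality for the fluctuating part versus the deterministic $(s/m)\|\bA\|_2^2$ contribution from the mean, and finally Markov's inequality at moment order $q\sim p\ln(1+s/2)$, which is exactly where the factors $6$ and $2e^{1/4}$ in (\ref{sp}) and the tail $(2/s)^p$ come from. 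One small imprecision: the chain $\sqrt{q}\,\max_i\|He_i\|_2\le\sqrt{q}\,\mu\sqrt{s}$ is not right as written, since $\max_i\|He_i\|_2$ can only be bounded by $\mu\sqrt{m-1}$; the quantity that actually appears after decoupling is the column norm of the \emph{restricted} matrix, $\max_i\|P_{T'}He_i\|_2$, whose expectation is of order $\mu\sqrt{s}$ because only about $s$ of the $m-1$ off-diagonal entries survive the restriction. With that correction your outline matches the cited source, and since the proposition is attributed to \cite{Tropp}, citing it (as the paper does) rather than reproving it is the appropriate course.
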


\begin{proposition}\label{tropp2} \cite{Tropp}
Let $X$ be drawn from the target ensemble.
If 
\beq
\label{43}
\mu^2 s\leq \lt(8\ln{m\over \ep}\rt)^{-1},\quad\ep\in (0,1)
\eeq
and if the least singular value
\beq
\label{44}
\sigma_{\rm min} (\bA_s)\geq 2^{-1/2},\quad |S|=s
\eeq
then $X$ is the unique solution of BP (\ref{L1}), except with probability $2\epsilon$.
\end{proposition}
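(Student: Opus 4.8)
This is a random-sign exact-recovery statement, and I would prove it by combining the Fuchs-type dual-certificate criterion for $\ell^1$ recovery with a concentration bound for sums of independent unimodular (Steinhaus) random variables. First I would record the deterministic criterion: if $X$ has support $S$, if $\bA_S$ has full column rank, and if the vector $\pi:=\bA_S(\bA_S^*\bA_S)^{-1}\mathrm{sgn}(X_S)$ satisfies $\max_{j\notin S}|\bA_j^*\pi|<1$, then $X$ is the \emph{unique} minimizer of (\ref{L1}). The rank hypothesis is automatic here since $\sigma_{\rm min}(\bA_S)\ge 2^{-1/2}>0$, so everything reduces to showing that, with probability at least $1-2\ep$, the quantity $\max_{j\notin S}|\bA_j^*\pi|$ stays strictly below $1$.

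Next I would condition on the random support $S$ (equivalently on $\bA_S$), leaving only the phases $\varepsilon:=\mathrm{sgn}(X_S)\in\IC^{s}$ random, with i.i.d.\ entries uniform on the unit circle and hence mean zero. For each $j\notin S$ one has $\bA_j^*\pi=\langle z_j,\varepsilon\rangle$ with $z_j:=(\bA_S^*\bA_S)^{-1}\bA_S^*\bA_j$, a vector that depends only on the geometry of $\bA$ and not on the phases. On the conditioning event I would bound $\|z_j\|_2\le \|(\bA_S^*\bA_S)^{-1}\|_{2\to2}\,\|\bA_S^*\bA_j\|_2\le 2\sqrt{s}\,\mu$, using $\sigma_{\rm min}(\bA_S)\ge2^{-1/2}$ for the first factor and the fact that every entry of $\bA_S^*\bA_j$ is an inner product of two distinct columns, hence of modulus at most $\mu$, for the second. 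Then $\langle z_j,\varepsilon\rangle$ is, conditionally, a sum of $s$ independent mean-zero terms of total energy $\|z_j\|_2^2\le 4s\mu^2$, so a Hoeffding/Bernstein-type tail bound for Steinhaus sums (applied to the real and imaginary parts separately) gives $\IP\big(|\langle z_j,\varepsilon\rangle|\ge1\big)\le c_1\exp(-c_2/(s\mu^2))$. Feeding in the hypothesis $\mu^2 s\le(8\ln(m/\ep))^{-1}$ makes this at most a small multiple of $\ep/m$, and a union bound over the fewer than $m$ indices $j\notin S$ yields $\IP(\max_{j\notin S}|\bA_j^*\pi|\ge1)\le2\ep$. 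Since this conditional estimate is uniform over all admissible supports $S$, it persists after averaging over the random support drawn from the target ensemble.

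The main obstacle, and the only place genuine care is needed, is the concentration step: one must use the sharp tail inequality for sums of independent unit-modulus complex random variables and keep track of the constants so that the coherence budget $8\ln(m/\ep)$ translates into exactly the failure probability $2\ep$ after the union bound. A secondary point is that the Fuchs-type criterion must be applied with the \emph{strict} inequality $<1$, together with the short complementary-slackness argument showing that the $\ell^1$ subgradient optimality conditions then force \emph{uniqueness} of $X$, not merely its optimality, as the statement asserts.
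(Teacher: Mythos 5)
The paper does not actually prove this proposition: it is quoted verbatim from Tropp's ``On the conditioning of random subdictionaries'' \cite{Tropp} and used as a black box in Appendix \ref{sec:pf}, so there is no in-paper argument to compare against. Your reconstruction is, however, essentially the proof from the cited source, and it is correct. The two pillars are exactly right: (i) the Fuchs dual-certificate criterion (full column rank of $\bA_S$ plus $\max_{j\notin S}|\bA_j^*\bA_S(\bA_S^*\bA_S)^{-1}\mathrm{sgn}(X_S)|<1$ implies $X$ is the \emph{unique} minimizer of (\ref{L1})), with injectivity supplied by $\sigma_{\rm min}(\bA_S)\geq 2^{-1/2}$; and (ii) conditioning on the support so that only the Steinhaus phases are random, with the deterministic bound $\|z_j\|_2\leq\|(\bA_S^*\bA_S)^{-1}\|_2\,\|\bA_S^*\bA_j\|_2\leq 2\sqrt{s}\,\mu$. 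Your accounting of constants also checks out: with the Steinhaus tail bound $\IP\left(|\langle v,\varepsilon\rangle|\geq u\|v\|_2\right)\leq 2e^{-u^2/2}$ (which is the form Tropp uses), taking $u=\|z_j\|_2^{-1}\geq(2\sqrt{s}\,\mu)^{-1}$ gives a per-index failure probability of $2e^{-1/(8s\mu^2)}\leq 2\ep/m$ under (\ref{43}), and the union bound over the at most $m$ off-support columns yields exactly the stated $2\ep$. The one point you should make explicit rather than gesture at is which Steinhaus concentration inequality you invoke: a cruder route (splitting into real and imaginary parts and applying real Hoeffding to each) loses a factor in the exponent and would force a larger constant than $8$ in (\ref{43}), so the claim as stated genuinely requires the two-dimensional subgaussian bound with constant $1/2$ in the exponent. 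With that inequality pinned down, the argument is complete.
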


First of all, (\ref{M}) and (\ref{Op}) together imply
(\ref{sp}) and (\ref{43})  with $\alpha=1/2$. Moreover,  by
Proposition \ref{tropp1}
(\ref{44}) holds with probability greater than or equal to the right hand side
of  (\ref{46}). 
Hence  we need only to derive
the claimed bound for 
the probability of the event $E$ that $X$ is the unique solution of BP.  This follows from  the estimate
\beqn
\IP(E)&\geq& \IP(E\big|\|\bA_s^*\bA_s-\bI_s\|_2< 2^{-1} )\IP(\|\bA_s^*\bA_s-\bI_s\|_2< 2^{-1})\\
&\geq &(1-2\ep)(1-(2/s)^{p})\\
&\geq& 1-2\ep-(2/s)^{p}. 
\eeqn

\section{Matched field processing}
\label{sec:A}

 Matched field processing (MFP) has been used extensively 
 for source localization in underwater acoustics and is closely
 related to the matched filter in signal processing. 
 \commentout{
 MFP can be considered an inverse problem in that localization is achieved
by comparing the received wavefront with that predicted by a model computed over the space
of hypothesized location parameters.
In principle, MFP consists of three main stages.
Stage one is to collect the measurement data
using  $n$ sensors; stage two is to select an appropriate
propagation model and use it to compute
the wave field at selected locations. The third stage of
MFP 
is to correlate the measured field with the modeled
fields. 
The propagation model is computed over a grid of rangeÐdepth cells and 
the candidate field with the highest
correlation should correspond to the true location
of the source. 
 In this paper, we assume the perfect
knowledge of the medium so that there
is no mismatch between the propagation model
and the medium. 
}

\commentout{
 The set of location parameters producing a wavefront
model which best matches the received wavefront observed by the array is the source location
estimate.
Thus to determine source location,
the model is computed over a grid of rangeÐdepth cells and each realization is compared with
the received wavefront to determine the best match.
Given  perfect
knowledge of the medium, }
The conventional MFP
uses the Bartlett processor with the
 ambiguity surface
\beq
B(\br)={\bG^*(\br)YY^*\bG(\br)\over
\|\bG(\br)\|_2^2}
\eeq
\cite{Tol}. 
The Bartlett processor is motivated by the
following optimization problem:
Maximize
the quantity
\beq
\label{snr}
W^* YY^* W
\eeq
subject to the constraint:
\[
W^* W=1.
\]
The solution 
\[
W=Y/\|Y\|_2
\]
is the weight vector for
the matched filter. 
In the case of one point source of amplitude $\sigma_1$ located at $\bx_1$, 
\[
Y=\sigma_1\bG(\br_1)
\]
hence
\beq
\label{mf1}
W={\sigma_1\bG(\br_1)\over |\sigma_1| \|\bG(\br_1)\|_2}.
\eeq
Extending (\ref{mf1}) to an arbitrary field point $\br$ 
by substituting $\br$ for  $\br_1$
we obtain  the Bartlett processor from (\ref{snr}). 

In general, $Y$ is the $n$-dimensional measurement vector consisting
the received signals of the array. 
For inverse scattering in the RM set-up, there are
$n$ measurement vectors corresponding  to
$n$  probe signals. The ambiguity surface  in this case
 is  the sum of the $n$ ambiguity surfaces for the $n$ probe
 signals. 

In contrast to the conventional matched field processor, 
the compressed sensing processor  utilizing the $\ell^1$-minimization
\cite{CRT2, Don1} or various greedy algorithms
\cite{DM, NTV, NV, Tro} are nonlinear.

\commentout{
  \section{The Foldy-Lax formulation}
  \label{sec-FL}
  \label{Appendix_FoldyLax}

Let $\bx_j,  j=1,...,s$ be the locations of
the $s$ target points.  
The Green function  $\tilde G(\mathbf{r},\ba_i)$ in the presence of  targets satisfies 
\beq
\label{fl}
\tilde G(\mathbf{r},\ba_i)=G(\br, \ba_i)+\sum_{j=1}^{s}\sigma_j G(\mathbf{r},\bx_j) \tilde G(\bx_j,\ba_i) 
\eeq
where $\sigma_j$ is the scattering amplitude of scatterer $j$. 
Evaluation at the scatterers yields
the Foldy-Lax equation: 
\beqn
\tilde G(\bx_j,\ba_i)&=&G(\bx_j,\ba_i)+\sum_{l\neq j}\sigma_l G(\bx_j,\bx_l) \tilde G(\bx_l,\ba_i),\quad j=1,...,s.
 \eeqn
Consequently, the exciting field at the target locations is determined by
\beq
\begin{pmatrix}\tilde G(\bx_1,\ba_i)\\ \tilde G(\bx_2,\ba_i)\\ \vdots \\ \tilde G(\bx_{s},\ba_i) \end{pmatrix}&=&
\mathbf{F}^{-1}
\begin{pmatrix}G(\bx_1,\ba_i)\\ G(\bx_2,\ba_i)\\ \vdots \\ G(\bx_{s},\ba_i) \end{pmatrix}\label{ill}
\eeq
where
\beqn
\mathbf{F}&=&\begin{pmatrix}1&-\sigma_2G(\bx_1,\bx_2)&\dots&-\sigma_{s}G(\bx_1,\bx_{s})\\
-\sigma_1G(\bx_2,\bx_1)&1&\dots&-\sigma_{s}G(\bx_2,\bx_{s})\\  \dots&\dots&\ddots&\dots \\
-\sigma_1G(\bx_{s},\bx_1)&-\sigma_2G(\bx_{s},\bx_2)&\dots&1\end{pmatrix}.\label{Foldy_Lax_Matrix}
\eeqn
The vector on the left hand side of (\ref{ill}) in turn determines
the scattered field through (\ref{fl}). 

Let $\bG$ be the matrix whose columns are $\bG(\bx_j), j=1,...,s$:
\[
\bG=\lt(\bG(\bx_1), \bG(\bx_2),\cdots, \bG(\bx_s)\rt)
\]
where $\bG$ is the Green vector given by (\ref{1.0})
and $\Sigma={\rm diag}(\sigma_1,...,\sigma_s)$. 
The response matrix $\bR$ is an $n\times n$ square matrix given by
\[
\bR=\bG\Sigma {\mathbf F}^{-1} \bG^t.
\]
\commentout{
\beqn
\bR= \begin{pmatrix} &&&\\ \sigma_1\bG (\bx_1)&\sigma_2\bG (\bx_2)&\dots&\sigma_{s} \bG(\bx_{s})\\&&&\end{pmatrix}
\mathbf{F}^{-1}
\begin{pmatrix} \bG^t(\bx_1)\\ \bG^t(\bx_2)\\ \vdots\\  \bG^t(\bx_{s})\end{pmatrix}
\eeqn
}

}

\end{document}